\newtheorem{definition}{Definition}
\newtheorem{theorem}{Theorem}
\newenvironment{proof}{\begin{IEEEproof}}{\end{IEEEproof}}
\newtheorem{example}{Example}  
\newtheorem{strategy}{Strategy}    
\newtheorem{lemma}{Lemma}  
\begin{document}
%

\title{HUOPM: High Utility Occupancy Pattern Mining}
%
%
        
\author{Wensheng Gan, 
        Jerry Chun-Wei Lin*, 
        Philippe Fournier-Viger,\\ 
        Han-Chieh Chao,~\IEEEmembership{Senior Member,~IEEE}
        and Philip S. Yu,~\IEEEmembership{Fellow,~IEEE}

\thanks{This research was partially supported by the Shenzhen Technical Project under JCYJ20170307151733005 and KQJSCX20170726103424709. Corresponding author: Jerry Chun-Wei Lin, Email: jerrylin@ieee.org}
\thanks{Wensheng Gan is with the School of Computer Science and Technology, Harbin Institute of Technology (Shenzhen), Shenzhen, China, and with the Department of Computer Science, University of Illinois at Chicago, IL, USA. (Email: wsgan001@gmail.com)}
\thanks{Jerry Chun-Wei Lin is with the Department of Computing, Mathematics, and Physics, Western Norway University of Applied Sciences, Bergen, Norway. (Email: jerrylin@ieee.org)}
\thanks{Philippe Fournier-Viger is with the School of Natural Sciences and Humanities, Harbin Institute of Technology (Shenzhen), Shenzhen, China. (Email: philfv@hitsz.edu.cn)}
\thanks{Han-Chieh Chao is with the Department of Electrical Engineering, National Dong Hwa University, Hualien, Taiwan (Email: hcc@ndhu.edu.tw)}
\thanks{Philip S. Yu is with the Department of Computer Science, University of Illinois at Chicago, IL, USA. (Email: psyu@uic.edu)}%

}

\markboth{Journal of \LaTeX\ Class Files,~Vol.~14, No.~8, August~2015}%
{Shell \MakeLowercase{\textit{et al.}}: Bare Demo of IEEEtran.cls for IEEE Journals}

\maketitle

\begin{abstract}
Mining useful patterns from varied types of databases is an important research topic, which has many real-life applications. Most studies have considered the frequency as sole interestingness measure for identifying high quality patterns. However, each object is different in nature. The relative importance of objects is not equal, in terms of criteria such as the utility, risk, or interest. Besides,  another limitation of frequent patterns is that they generally have a low occupancy, i.e., they often represent small sets of items in transactions containing many items, and thus may not be truly representative of these transactions. To extract high quality patterns in real-life applications, this paper extends the occupancy measure to also assess the utility of patterns in transaction databases. We propose an efficient algorithm named \underline{\textbf{H}}igh \underline{\textbf{U}}tility \underline{\textbf{O}}ccupancy \underline{\textbf{P}}attern \underline{\textbf{M}}ining (HUOPM). It considers user preferences in terms of frequency, utility, and occupancy. A novel \underline{\textbf{F}}requency-\underline{\textbf{U}}tility tree (FU-tree) and two compact data structures, called the utility-occupancy list and FU-table, are designed to provide global and partial downward closure properties for pruning the search space. The proposed method can efficiently discover the complete set of high quality patterns without candidate generation. Extensive experiments have been conducted on several datasets to evaluate the effectiveness and efficiency of the proposed algorithm. Results show that the derived patterns are intelligible, reasonable and acceptable, and that HUOPM with its pruning strategies outperforms the state-of-the-art algorithm, in terms of runtime and search space, respectively.
\end{abstract}

\begin{IEEEkeywords}
utility mining, interesting pattern, utility theory, utility occupancy.
\end{IEEEkeywords}

%
\IEEEpeerreviewmaketitle

\section{Introduction}
\IEEEPARstart{F}{requent} pattern mining (FPM) and association rule mining (ARM) \cite{agrawal1994fast,han2004mining,geng2006interestingness} are some of the important and fundamental data mining techniques to extract meaningful and useful information from massive amounts of data \cite{chen1996data,gan2017data}. FPM is the process of discovering frequent sets of items in transaction databases based on a user-specified minimum \textit{support} threshold. In recent decades, the task of frequent pattern mining has been extensively studied by mainly considering the frequency measure  for selecting patterns. Other properties and interestingness measures of frequent patterns and association rules have also been studied \cite{geng2006interestingness} to fulfill the need of discovering  more interesting patterns in databases, e.g., maximal frequent patterns \cite{gouda2005genmax}, all-confidence \cite{omiecinski2003alternative}, coherence \cite{omiecinski2003alternative}, and other constraints \cite{pei2001mining}.

In real-life applications, the importance of objects or patterns is often evaluated in terms of implicit factors such as the utility, interestingness, weight, risk or profit \cite{lin2016weighted,gan2018survey}. Hence, the knowledge that actually matters to the user may not be found using traditional FPM and ARM algorithms. To measure the utility of patterns, a utility-based mining framework called high utility pattern mining (HUPM) \cite{gan2018survey} was proposed, which considers the relative importance of items (\textit{item utility}). It has become an emerging research topic in recent years \cite{ahmed2009efficient,liu2012mining,tseng2010up,tseng2013efficient}. Chan et al. \cite{chan2003mining} first introduced the problem of utility-based pattern mining based on business objectives. Yao et al. \cite{yao2004foundational} then defined utility mining as the problem of finding profitable itemsets while considering both the purchase quantities of objects/items in transactions (\textit{internal utilities}) and their unit profits (\textit{external utilities}). In addition, the utility (i.e., importance, interest or risk) of each object/item can be predefined based on users' background knowledge or preferences. 

Recently, a study \cite{tang2012incorporating} has shown that considering the \textit{occupancy} of patterns is critical for many applications. The occupancy measure is used to ensure that  each pattern found represents a large part of transactions where it appears. This allows to find patterns that are more representative, and thus of higher quality. In many applications, occupancy is an interestingness factor to measure a pattern's completeness; and it is an indispensable complement to frequency (or support). 
However,  implicit factors such as the utility, interestingness, risk or profit of objects or patterns are ignored in \textit{occupancy pattern mining} \cite{tang2012incorporating}, and it ignores the fact that items/objects may appear more than once in transactions. On the other hand, HUPM does not assess the occupancy of patterns. As a result, the discovered patterns may be irrelevant or even misleading if they are not representative of the supporting transactions. For example, if a pattern occurs in many transactions but its actual occupancy is low, it may be inappropriate to use this pattern for recommendation. Hence, it is desirable to find patterns that are representative of the transactions where they occur. In particular, extracting patterns that occupy a large portion of the utility in their supporting transactions is critical to several applications. Recently, an algorithm called OCEAN was proposed to address the problem of high utility occupancy pattern mining by introducing the \textit{utility occupancy} measure \cite{shen2016ocean}. However, it fails to discover the complete set of high utility occupancy patterns (HUOPs) and also encounter several performance problems.

Therefore, this paper proposes an effective and efficient algorithm named \textbf{\underline{H}}igh \textbf{\underline{U}}tility \textbf{\underline{O}}ccupancy \textbf{\underline{P}}attern \textbf{\underline{M}}ining (\textbf{HUOPM}). The proposed algorithm extracts patterns based on users' interests, pattern frequency and utility occupancy, and considers that each item may have a distinct utility. The concept of \textit{utility occupancy} is adopted to evaluate the utility contribution of patterns in their supporting transactions. Thus, the concept of HUOP is quite different from the previous concept of high utility pattern. The previous study \cite{tang2012incorporating} shows that the \textit{occupancy} of patterns is very critical to some applications. Compared with \textit{occupancy}, \textit{utility occupancy} is suitable and more effective for pattern analysis in some real-life domains, for example: market basket analysis \cite{agrawal1994fast,han2004mining,tseng2013efficient}, print-area recommendation for Web pages \cite{tang2012incorporating}, Web click analysis \cite{ahmed2009efficient}, mobile service provider \cite{shen2016ocean}, and biomedical applications \cite{zihayat2017mining}. 

Consider the travel route recommendation for tourist to visit, eat, and spend time/money, HUOPM can successfully exploit the frequency and utility contribution ratio (w.r.t. utility occupancy) of a specific travel route. For another example, a mobile service provider can obtain some real-world data, and each record contains the traffic information that a customer spends on various mobile Apps. In general, each App has its unit utility (i.e., price, popular). In this case, the concept of \textit{utility occupancy} and the HUOPM model can be applied to discover the high qualified patterns (both frequent and high utility occupancy). After identifying the set of mobile Apps that users are interested in (e.g., download frequently, spend most of their money), this information can also be utilized to improve the service.  The major contributions of this paper are summarized as follows. 

\begin{itemize}
	\item A novel and effective HUOPM algorithm is proposed to address the novel research problem of mining high utility occupancy patterns with the utility occupancy measure. To the best of our knowledge, no prior algorithms address this problem successfully and effectively. 

	\item Two compact data structures called \underline{\textbf{U}}tility-\underline{\textbf{O}}ccupancy list (UO-list) and \underline{\textbf{F}}requency-\underline{\textbf{U}}tility table (FU-table), are designed to store the required information about a database, for mining HUOP. When necessary and sufficient conditions are met, information about a pattern can be directly obtained from the built UO-lists of its prefix patterns, and thus HUOPM can avoid repeatedly scanning the database. 
	\item Moreover, the concept of remaining utility occupancy is utilized to calculate an upper bound to reduce the search space. Based on the developed pruning strategies, the HUOPM algorithm can directly discover HUOPs from the designed frequency-utility tree using UO-lists by only scanning a database twice. 
	\item Extensive experiments have been conducted on both real-world and synthetic datasets to evaluate how effective and efficient the proposed HUOPM algorithm is. Results show that  HUOPM  can reveal the desired useful patterns, having a high utility occupancy and predefined frequency, and that the  proposed pruning strategies are effective at leading to a more compact search space. 
\end{itemize}

The rest of this paper is organized as follows. Related work is reviewed in Section \ref{sec:2}. In Section \ref{sec:background}, some key preliminaries are introduced and the addressed problem is defined. The proposed HUOPM algorithm and several pruning strategies are described in Section \ref{sec:4}. Then, to evaluate the effectiveness and efficiency of the HUOPM algorithm, results of extensive experiments comparing its performance with the state-of-the-art algorithm are provided in Section \ref{sec:experiments}. Finally, conclusions and future work are drawn in Section \ref{sec:conclusion}.

\section{Related Work}
\label{sec:2}
\subsection{Support-based Pattern Mining}
In numerous domains, web mining and data mining technologies provide powerful ways of discovering interesting and useful information in massive amounts of data. In recent decades, the task of support-based frequent pattern mining has been widely studied. Most algorithms have been designed to extract patterns using the frequency (support) measure as selection criterion  \cite{agrawal1994fast,han2004mining,geng2006interestingness}. Among them, Apriori \cite{agrawal1994fast} and FP-growth \cite{han2004mining} are two of the most well-known frequent pattern mining algorithms. Other properties and interestingness measures of frequent patterns and association rules have also been studied to fulfill the need for discovering more interesting patterns \cite{geng2006interestingness,omiecinski2003alternative,pei2001mining}

In the past, significant progress have been made concerning data mining with support and confidence \cite{geng2006interestingness}. Many algorithms have been proposed to discover high quality patterns from the binary transaction databases \cite{geng2006interestingness,omiecinski2003alternative,pei2001mining,luna2016speeding}. However, these approaches do not consider that objects/items may occur more than once in a transaction (e.g., may have non binary purchase quantities), and that objects/items are often not equally important to the user (e.g., may not have the same unit profit). Up to now, associations of boolean attributes only considered in traditional ARM, which only reflects the frequency of the presence or absence of an item in the database. Thus, the problem of quantitative association rule mining (QARM) has been studied \cite{srikant1996mining,hong1999mining,verlinde2005fuzzy}. In QARM, the attributes of item can be quantitative (e.g., age, income, purchase quantity) instead of the boolean value (0 or 1). However, QARM still does not reflect the other important factors of items, such as interest, price, risk or profit. In some real-life applications, the frequency (support) of a pattern may be an inappropriate measure to determine the importance of this pattern. 

\subsection{Utility-based Pattern Mining}
To address the limitations of support-based pattern mining and to extract high profitable patterns, the high-utility pattern mining (HUPM) task has been studied \cite{ahmed2009efficient,liu2012mining,tseng2010up,tseng2013efficient}. HUPM considers both the occur quantities and unit profits of objects/items rather than just considering their occurrence frequencies. A pattern is concerned as a high-utility pattern (HUP) if its utility is no less than the predefined minimum utility threshold.  Chan et al. \cite{chan2003mining} presented a framework to mine the top-\textit{k} closed utility patterns. Yao et al. \cite{yao2004foundational} then defined utility mining as the problem of discovering profitable patterns while considering both the purchase quantity of objects/items in transactions (\textit{internal utility}) and their unit profits (\textit{external utility}). They then introduced the mathematical properties of utility constraint to respectively reduce the search space and expected utility upper bounds \cite{yao2006mining}. The concept of HUPM is different from the support-based weighted frequent pattern mining (WFPM) \cite{lin2016weighted,tao2003weighted,lin2015rwfim}, because WFPM does not consider the quantity information of each object/item, and the range of weight is [0,1]. By taking both quantities and profits of items into consideration, HUPM can reveal more valuable patterns than frequent ones. Liu et al. then presented a transaction-weighted utilization (TWU) model \cite{liu2005two} to discover HUPs by adopting a transaction-weighted downward closure (TWDC) property. In recent years, developing algorithms for mining high utility patterns is an active research topic. Some recent algorithms for HUPM are IHUP \cite{ahmed2009efficient}, UP-growth \cite{tseng2010up}, UP-growth+ \cite{tseng2013efficient}, HUI-Miner \cite{liu2012mining}, d2HUP \cite{liu2012direct}, FHM \cite{fournier2014fhm}, HUP-Miner \cite{krishnamoorthy2015pruning}, and EFIM \cite{zida2017efim}. 

Different from the efficiency issue of HUPM, there are also a number of studies that focus on some interesting effectiveness issues of HUPM. For example, mining high utility patterns from different types of data (i.e., uncertain data \cite{lin2016efficient,lin2017efficiently}, temporal data \cite{lin2015efficient,lin2017two}, transaction data with negative unit profits \cite{lin2016fhn}, dynamic data \cite{lin2015incremental,lin2016fast2,lin2015fast,2gan2018survey}, and stream data \cite{yun2017efficient}), HUPM with various discount strategies \cite{lin2016fast}, non-redundant correlated HUPs \cite{gan2018extracting}, HUPM using multiple minimum utility thresholds \cite{lin2016efficient}, a condensed set of HUPs \cite{tseng2016efficient}, discriminative HUPs \cite{lin2017fdhup}, top-$k$ issue of HUPM \cite{tseng2016efficient}, and HUPM from big data \cite{lin2015mining}. All these HUPM algorithms discover high utility patterns based on the basic definitions of utility mining model by Yao et al. \cite{yao2004foundational,yao2006mining}. In recent years, there has been a growing interest in utility-oriented mining models for discovering different types of high utility patterns (e.g., itemsets, rules, sequences, and episodes) and profitable information. The comprehensive survey of this research field can be referred to \cite{gan2018survey,2gan2018survey}.

\subsection{High Quality Pattern Mining}

Although HUPM methods can evaluate the utility of patterns,  they do not assess the occupancy of patterns in their supporting transactions in terms of utility. In other words, HUPM methods do not consider how important the utility of patterns are compared to the utility of the transactions where they appear. For some real-life applications, such as pattern-based recommendation, it is desirable that interesting patterns should occupy a large portion of the transactions where they appear \cite{tang2012incorporating}. The above approaches fail to meet this requirement. To address this issue, a novel measure called \textit{utility occupancy} was proposed in the OCEAN algorithm \cite{shen2016ocean}. As a relative measure, utility occupancy ensures that a certain set of items is important to individual users. 

However, the OCEAN algorithm  suffers from two important drawbacks. First, the mining results derived by OCEAN are incomplete. The reason is that the exact utility information is incorrectly kept in the utility-list \cite{liu2012mining} structure using an inconsistent sorting order. As a result, OCEAN applies pruning strategies with incorrect information. Second, OCEAN is not efficient as it does not utilize the support property and utility occupancy property well to prune the search space. As a result, it performs poorly when the related parameters are set low. Hence, developing an effective and efficient algorithm to address these limitations is important. In this study, the problem of effectively mining the complete set of all high utility occupancy patterns from a database is discussed.


\section{Preliminaries and Problem Statement} 
\label{sec:background}

Let \textit{I} = \{\textit{i}$_{1}$, \textit{i}$_{2}$, $\ldots$, \textit{i$_{m}$}\} be a finite set of \textit{m} distinct items in a transactional quantitative database \textit{D} = \{\textit{T}$_{1}$, \textit{T}$_{2}$, $\ldots$, \textit{T$_{n}$}\}, where each quantitative transaction \textit{T$_{q}$} $ \in$ \textit{D} is a subset of \textit{I}, and has a unique identifier \textit{tid}. The total utility of all items in a transaction is named transaction utility and denoted as \textit{tu}. An itemset \textit{X} with \textit{k} distinct items \{\textit{i}$_{1}$, \textit{i}$_{2}$, $\ldots$, \textit{i$_{k}$}\} is called a \textit{k}-itemset. A database consisting of 10 transactions and 5 distinct items is shown in TABLE \ref{table:db}, which will be used as a running example.

\begin{table}[!htbp]
	\centering
	\small
	\caption{An example quantitative database.}
	\label{table:db}
	\begin{tabular}{|c|c|c|c|}
		\hline
		\textbf{\textit{tid}} & \textbf{Transaction (item, quantity)} & \textbf{\textit{tu}} \\ \hline
		$ T_{1} $ & 	\textit{a}:2, \textit{c}:4, \textit{d}:7  &  \$65 \\ \hline
		$ T_{2} $ & 	\textit{b}:2, \textit{c}:3   & \$37 \\ \hline
		$ T_{3} $ &	    \textit{a}:3, \textit{b}:2, \textit{c}:1, \textit{d}:2  &  \$38 \\ \hline
		$ T_{4} $ &	    \textit{b}:4, \textit{d}:3  &  \$11 \\ \hline
		$ T_{5} $ &	    \textit{a}:1, \textit{b}:3, \textit{c}:2, \textit{d}:5, \textit{e}:1 &   \$49  \\ \hline
		$ T_{6} $ &	     \textit{c}:2, \textit{e}:4  &  \$58 \\ \hline
		$ T_{7} $ &	     \textit{c}:2, \textit{d}:1 &  \$23 \\ \hline
		$ T_{8} $ &	     \textit{a}:3, \textit{b}:1, \textit{d}:2, \textit{e}:4  &  \$61 \\ \hline
		$ T_{9} $ &  	\textit{a}:2, \textit{c}:4, \textit{d}:1 &  \$59 \\ \hline
		$ T_{10} $	&	\textit{c}:3, \textit{e}:1 &  \$42 \\ \hline	
	\end{tabular}
\end{table}

\begin{definition}
	\label{def_1}
	\rm The number of transactions containing an itemset is said to be its occurrence frequency or \textit{support count} \cite{agrawal1994fast,han2004mining}. The support count of an itemset $X$, denoted as $ sup(X) $, is the number of \textit{supporting transactions} containing $X$. Let the set of transactions supporting an itemset $X$ denote as $ \varGamma_X$. A transaction $T_q$ is said to support an itemset $X$ if $ X \subseteq T_q $. Thus, $ sup(X) = |\varGamma_X| $. Let the user-specified minimum support threshold denote as $\alpha $, $ X $ is called a frequent pattern (\textit{FP}) in a database $D$ if $ sup(X) \geq \alpha \times |D| $. 
	
\end{definition}

\begin{definition}
	\label{def_2}
	\rm Each item $i_m$ in a database $D$ has a unit profit, denoted as $pr(i_m)$, which represents its relative importance to the user. Item unit profits are indicated in a user-specified profit table, denoted as \textit{ptable} = \{\textit{pr}$(i_{1})$, \textit{pr}$(i_{2})$, $\ldots$, \textit{pr$(i_{m})$}\}. The utility of an item $ i_{j} $ in a transaction $ T_{q} $ is defined as $ u(i_{j}, T_{q}) = q(i_{j}, T_{q})\times pr(i_{j}) $, in which $ q(i_{j}, T_{q})$ is the occur quantity of $ i_{j} $ in $ T_{q} $. The utility of an itemset/pattern \textit{X} in a transaction $ T_{q} $ is defined as $ u(X, T_{q}) = \sum _{i_{j}\in X\wedge X\subseteq T_{q}}u(i_{j}, T_{q}) $. Thus, the total utility of \textit{X} in a database \textit{D} is $	u(X) = \sum_{X\subseteq T_{q}\wedge T_{q}\in D} u(X, T_{q}) $.
\end{definition}

\begin{example} 
	In TABLE \ref{table:db}, assume that the unit profit of items $ (a) $ to $ (e) $ are  defined as \{$pr$(\textit{a}):\$7, $pr$(\textit{b}):\$2, $pr$(\textit{c}):\$11, $pr$(\textit{d}):\$1, $pr$(\textit{e}):\$9\}, respectively. Consider the itemsets $ (a) $ and $ (ab) $, their utilities in $T_3$ are $ u(a, T_3) $ = $ 3 \times \$7 $ = \$21, and $ u(ab, T_3) $  = $ 3 \times \$7  + 2 \times \$2 $ = \$21 + \$4 = \$25, respectively. Thus, their utilities in the database are  calculated as $ u(a) $ = $ u(a, T_1) $ + $ u(a, T_3) $ + $ u(a, T_5) $ + $ u(a, T_{8}) $ + $ u(a, T_{9}) $ = \$14 + \$21 + \$7 + \$21 + \$14 = \$77, and $ u(ab) $ = $ u(ab,T_3) $ + $ u(ab,T_5) $ + $ u(ab,T_8) $ = \$21 + \$13 + \$23 = \$57.
\end{example}

\begin{definition}
	\rm The transaction utility (\textit{tu}) of a transaction $ T_{q} $ is $
	tu(T_{q}) = \sum_{i_{j}\in T_{q}}u(i_{j}, T_{q}) $, where $i_j$ is the $j$-th item in $ T_q $.
\end{definition}

\begin{example} 
	$ tu(T_{1}) $ = $ u(a, T_{1}) $ + $ u(c, T_{1}) $ + $ u(d, T_{1}) $ = \$14 + \$44 + \$7 = \$65. The transaction utilities of transactions \textit{T}$ _{1} $ to \textit{T}$ _{10} $ are respectively calculated as \textit{tu}(\textit{T}$ _{1} $) = \$65, \textit{tu}(\textit{T}$ _{2} $) = \$37, \textit{tu}($T_{3} $) = \$38, \textit{tu}($T_{4}$) = \$11, \textit{tu}($T_{5}$) = \$49, \textit{tu}($T_{6}$) = \$58, \textit{tu}($T_{7}$) = \$23, \textit{tu}($T_{8}$) = \$61, \textit{tu}($T_{9} $) = \$59, and \textit{tu}(\textit{T}$ _{10} $) = \$42, as shown in TABLE \ref{table:db}.
\end{example} 

In the study \cite{tang2012incorporating}, a new interestingness measure called \textit{occupancy} was proposed to discover frequent patterns having a strong occupancy. A new concept named \textit{utility occupancy} \cite{shen2016ocean} is first introduced below. It is important to notice that the \textit{utility occupancy} concept is different from the original \textit{occupancy} concept presented in the DOFIA algorithm \cite{tang2012incorporating}.

\begin{definition}
	\label{def_woOfTq}
	\rm The utility occupancy of an itemset $X$ in a supporting transaction  $ T_q $ is denoted as $ uo(X, T_q) $, and defined as the ratio of the utility of $X$ in that transaction divided by the total utility of that transaction: 
	\begin{equation}
	uo(X, T_q) = \dfrac{u(X, T_q)}{tu(T_q)}.
	\end{equation}
\end{definition}

\begin{example} 
	Since $ tu(T_1)$ = \$65 and $ tu(T_3) $ = \$38, the utility occupancy of $ (ac) $ in $ T_1 $ is calculated as $ uo(ac,T_1) $ = \$58/\$65 $\approx $ 0.8923, and the utility occupancy of $ (ac) $ in $ T_3 $ is calculated as $ uo(ac,T_3) $ = \$32/\$38 $\approx $ 0.8421. 
\end{example} 

\begin{definition}
	\label{def_6}
	\rm The utility occupancy of an itemset $X$ in a database $ D $ is denoted as $ uo(X) $, and defined as: 
	\begin{equation}
	uo(X) = \dfrac{\sum_{X \subseteq T_q \wedge T_q \in D}uo(X,T_q)}{|\varGamma_X|} ,
	\end{equation}
	where $ \varGamma_X $ is the set of \textit{supporting transactions }of $X$ in $D$ (thus $ |\varGamma_X| $  is equal to the support of $X$ in $D$).
\end{definition}

Hence, utility occupancy can be used to evaluate how patterns contribute to the total utility of transactions where they appears in. In prior work \cite{tang2012incorporating}, the occupancy of a pattern in a transaction was defined as the ratio between the number of items in this pattern and the number of items in the transaction. The utility occupancy of a pattern generalizes this definition to consider that items may have distinct utility values, and is defined as the harmonic average of the occupancy utility values in all supporting transactions. Therefore, the two concepts, occupancy and utility occupancy, are different in nature. 

\begin{definition}
	\label{def_HUOP}
	\rm Given a minimum support threshold $ \alpha $ ($ 0 < \alpha \leq 1 $) and a minimum utility occupancy threshold $ \beta $ ($ 0 < \beta \leq 1 $), an itemset $X$ in a database $D$ is said to be a high utility occupancy pattern with high frequency and strong utility occupancy, denoted as \textit{HUOP}, if it satisfies the following two conditions: $ sup(X) \geq \alpha \times |D| $ and $ uo(X) \geq \beta $.
\end{definition}

\begin{example} 
	The utility occupancies of $ (a) $ and $ (ab) $ in TABLE \ref{table:db} are calculated as: $ uo(a) $ = ($ uo(a, T_1) $ + $ uo(a, T_3) $ + $ uo(a, T_5) $ + $ uo(a, T_{8}) $ + $ uo(a, T_{9}) $)/5 = (0.2154 + 0.5526 + 0.1429 + 0.3443 + 0.2373)/5 $\approx $ 0.2985, $ uo(ab) $ = ($ uo(ab,T_3) $ + $ uo(ab,T_5) $ + $ uo(ab,T_8) $)/3  = (0.6579 + 0.2653 + 0.3371)/3 = 0.4201. When $ \alpha $ and $ \beta $ are set to 30\%  and 0.30, the complete set of HUOPs in the running example database is: {$(c)$, $(e)$, $(ab)$, $(ac)$, $(ad)$, $(bc)$, $(bd)$, $(cd)$, $(ce)$, $(abd)$, $(acd)$}, as shown in TABLE \ref{table:patterns}. Clearly, the utility occupancy measure does not respect the \textit{downward closure} property of Apriori in FPM.
\end{example}

\begin{table}[!htbp]
	\centering
	\small
	\caption{The derived HUOPs.} 
	\label{table:patterns}
	\begin{tabular}{|c|c|c|c|c|c|}
		\hline
		\textbf{Pattern} & \textbf{\textit{sup}} & \textbf{\textit{uo}} & 	\textbf{Pattern} & \textbf{\textit{sup}} & \textbf{\textit{uo}} \\ \hline	
		$ (c) $ &	8  &	0.6468    &  $ (bd) $ &	4  &	0.3620  \\ \hline
		$ (e) $ &	4  &	0.4022	  &  $ (cd) $ &	5  &	0.6881  \\ \hline	
		$ (ab) $ &	3  &	0.4334    &  $ (ce) $ &	3  &	0.8776  \\ \hline
		$ (ac) $ &	4  &	0.8273    &  $ (abd) $ & 3  &	0.4959  \\ \hline
		$ (ad) $ &	5  &	0.3609    &  $ (acd) $ & 4  &	0.8972   \\ \hline	
		$ (bc) $ &	3  &	0.6554    & & & \\ \hline 
	\end{tabular}
\end{table}

Note that the minimum support count must be set such that $\alpha \times |D| > 1 $. Otherwise, each transaction (or an itemset) will have a utility occupancy value of 1. In the example of TABLE \ref{table:db}, if $\alpha$ is set to 0.1, then ($\alpha \times |D|$) = 1, and all ten transactions are HUOPs. For example, $T_1$ (i.e., the itemset ($acd$)) has a support count as 1 and a utility occupancy as 1.0, thus satisfying the two conditions of HUOP. Based on the above definitions, the problem statement of high utility occupancy pattern mining (HUOPM) is formulated as follows:

\textbf{Problem Statement.} Given a transaction database $ D $, a profit-table indicating the distinct profit of each item, a minimum support threshold $ \alpha $, and a minimum utility occupancy threshold $ \beta $. The problem of mining high utility occupancy patterns (which are both frequent and dominant in terms of high utility occupancy) is to discover the complete set of patterns that not only have a frequency no less than $ \alpha \times |D| $, but also have a utility occupancy no less than  $ \beta $.

\section{Proposed Algorithm for Mining HUOPs} 

\label{sec:4}
In the section, we propose two compact data structures called utility-occupancy list (UO-list) and frequency-utility table (FU-table) to maintain the utility occupancy information about a database. Then we utilize both the support and utility occupancy measures to prune the search space for mining the more interesting and useful high utility occupancy patterns.

\subsection{Search Space for Mining HUOPs}

According to previous studies, the search space for the pattern mining problem can be represented as a lattice structure \cite{pasquier1998pruning} or as a Set-enumeration tree \cite{rymon1992search}. Based on the final derived HUOPs of the given example, it can be observed that the well-known \textit{downward closure} property of Apriori does not hold for HUOPs. 
For example,  1-items $(a)$, $(b)$ and $(d)$ are not HUOPs, but their supersets \{$(ab)$, $(ac)$, $(ad)$, $(bc)$, $(bd)$, $(cd)$, $(abd)$, $(acd)$\} are HUOPs, as shown in TABLE \ref{table:patterns}. If no anti-monotone property is applied in HUOPM,  a huge number of candidates will need to be generated to obtain the actual HUOPs. It is thus a critical issue to design more suitable data structures and powerful pruning strategies to efficiently reduce the search space and to filter the number of unpromising patterns for mining HUOPs. 

\begin{definition}
	\label{def_10}
	\rm \textbf{(\textit{Total order $\prec$ on items})} Without loss of generality, assume that items in every transaction are sorted according to the lexicographic order. Furthermore, assume that the total order $\prec$ on items in the designed HUOPM algorithm adopts the support ascending order of items.
\end{definition}

Note that the total order $\prec$ used in the proposed HUOPM algorithm can be the support-based ascending or descending order, the TWU-based ascending or descending order, the lexicographic order, or any other total order on items. The processing order of patterns may, however, affect the mining efficiency of HUOPM, but does not affect the completeness and correctness of this algorithm. We assume that the support ascending order is adopted in the HUOPM algorithm. Notice that the efficiency when using different orders in HUOPM will be conducted and evaluated in the experiments. For example, the support counts of items in the running example are \{$sup$(\textit{a}):5, $sup$(\textit{b}):5, $sup$(\textit{c}):8, $sup$(\textit{d}):7, $sup$(\textit{e}):4\}. Thus, the support ascending order for five items is $\{sup(e) <$ $sup(a) \leq $ $sup(b) < sup(d)$ $< sup(c)\} $, and the total order $\prec$ on items is $ e \prec a \prec b \prec d \prec c $. 

\begin{definition}
	\label{def_11}
	\rm \textit{\textbf{(Frequency-utility tree, FU-tree)}}
	A frequency-utility tree (FU-tree) is a variant of sorted Set-enumeration tree using the total order $\prec$ on items. It contains the frequency and utility information.
\end{definition}

\begin{definition}
	\label{def_12}
	\rm \textbf{(\textit{Extension nodes in the FU-tree})} In the designed FU-tree, all child nodes of any tree node are called its extension nodes.
\end{definition}

Clearly, the designed FU-tree is a utility-based prefix-tree, and the complete search space can be traversed using a depth-first search (DFS) or breadth-first search (BFS), where each child node in the FU-tree is generated by extending its prefix (parent) node. Consider the running example with the adopted total order $\prec$ on items, the extension nodes of node $ (ab) $ are the itemsets $ (abd) $, $ (abc) $ and $ (abdc) $. Note that all the supersets of node $ (ab) $ are $ (eab)$, $(abd)$, $(abc)$, $(eabd)$, $(eabc)$, $(abde) $ and $ (eabde) $. Hence, the extension nodes of a tree node in a FU-tree are a subset of the supersets of that node.

\subsection{UO-List and FU-Table}

Two compact data structures, called utility-occupancy list (UO-list for short) and frequency-utility table (FU-table for short), are designed to keep essential information about patterns in the database. Note that the UO-list is different from the utility-list structure used in HUI-Miner \cite{liu2012mining}, FHM \cite{fournier2014fhm} and OCEAN \cite{shen2016ocean}. The occupancy information is not stored in utility-list, while UO-list keeps this information to quickly calculate the utility occupancy of a pattern using a join operation. Besides, a new concept called \textit{remaining utility occupancy} is introduced and applied to obtain an utility occupancy upper-bound, which will be presented in the next subsection. 

\begin{definition}
	\label{def_ruo}
	\rm The remaining utility occupancy of an itemset $X$ in a transaction  $ T_q $ is denoted as $ ruo(X,T_q) $, and defined as the sum of the utility occupancy values of all items appearing after $X$ in $T_q$ according to the total order $\prec$ , that is: 
	\begin{equation}
	ruo(X,T_q) = \dfrac{\sum_{ i_j \notin X \wedge X \subseteq T_q \wedge X \prec i_j }u(i_j,T_q)}{tu(T_q)}.
	\end{equation}
\end{definition}

For example, consider  $T_5$ and  itemsets  $(a)$ and $(ad)$ in TABLE \ref{table:db}. Since the total order $\prec$ on items is $ e \prec a \prec b \prec d \prec c $, we have that $ ruo(a,T_5)$ = $(u(b,T_5)$ + $u(d,T_5)$ + $u(c,T_5))/tu(T_5)$ = (\$6 + \$5 + \$22)/\$49  $ \approx $ 0.6735, and $ ruo(ad,T_{5})$ = $(u(c,T_5))/tu(T_5)$ = \$22/\$49 $ \approx $ 0.4490.

\begin{definition} 
	\rm \textbf{(\textit{Utility-occupancy list, UO-list}).} 
	Let $\prec$ be the total order on items from $I$. The \emph{utility-occupancy list} of an itemset $X$ in a database $D$ is a set of tuples corresponding to  transactions where $X$ appears. A tuple contains three elements $<$$\underline{\textbf{\textit{tid}}}, \underline{\textbf{\textit{uo}}}, \underline{\textbf{\textit{ruo}}}$$>$ for each transaction $T_{q}$ containing $X$. In each tuple, the $\underline{\textbf{\textit{tid}}}$ element is the transaction identifier of $T_{q}$; the $\underline{\textbf{\textit{uo}}}$ element is the utility occupancy of $X$ in $T_{q}$, w.r.t. $ uo(X,T_{q})$; and the $\underline{\textbf{\textit{ruo}}}$ element is defined as the remaining utility occupancy of $X$ in $T_{q}$, w.r.t. $ ruo(X,T_q) $.
\end{definition}

\begin{example} 
	Consider the running example and the defined total order $\prec$ ($ \{e \prec a \prec b \prec d \prec c \} $). The constructed UO-lists of five 1-itemsets (itemsets of length 1) are shown in Fig. \ref{fig:UO-list}.	Note that the UO-lists of all 1-itemsets are constructed after the HUOPM algorithm performs a single database scan.
\end{example}

\begin{figure}[!htbp]
	\centering
	\includegraphics[scale=0.50]{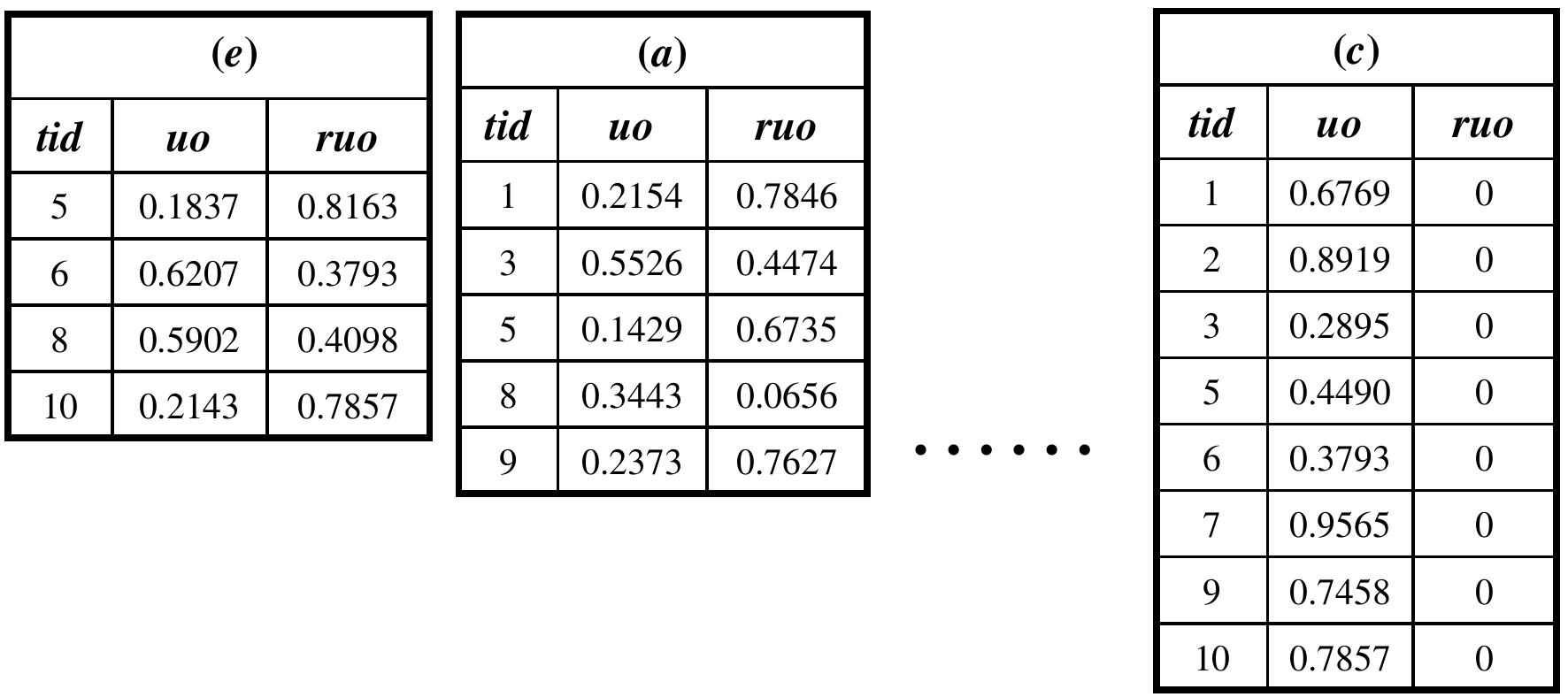}
	\caption{Constructed UO-lists of the five items.}
	\label{fig:UO-list}
\end{figure}

To discover HUOPs, the sum of the support and utility occupancy of a special pattern $X$ in a database $D$ can be efficiently calculated by adding the utility occupancies of all elements in the UO-list of $X$ (denoted as \textit{X.UOL}). Thus, the following information can be obtained from \textit{X.UOL}: (1) the name of $X$; (2) the set of transactions where $X$ appears (its support); (3) the sum of the utility occupancy of $X$ in $D$; and (4) the total remaining utility occupancy of $X$ in $D$. To explain how this useful information can be obtained from the constructed UO-lists, the following definitions are introduced by utilizing the UO-list structure. We further design a data structure called frequency-utility table (FU-table) by utilizing some useful properties of UO-list. The FU-table of an expected pattern is built after the construction of UO-list of this pattern, and it stores the following information.

\begin{definition}
	\label{def_15}
	\rm \textbf{(\textit{FU-table}).} A frequency-utility table (FU-table) of an itemset $ X $ contains four informations: the name of the itemset $ X $ (\textit{\textbf{\underline{name}}}), the support of $ X $ (\textit{\textbf{\underline{sup(X)}}}), the sum of the utility occupancies of $ X $ in database $ D $ (\textit{\textbf{\underline{uo(X)}}}), and the sum of the remaining utility occupancies of $ X $ in $ D $ (\textbf{\textit{\underline{ruo(X)}}}). Here, $ ruo(X) $ can be calculated as:
	\begin{equation}
	    ruo(X) = \dfrac{\sum_{X \subseteq T_q \wedge T_q \in D}ruo(X,T_q)}{|\varGamma_X|}. 
	\end{equation}
\end{definition}

The construction process of a FU-table is shown below. Consider an item $ (e) $ in TABLE \ref{table:db}, which appears in $T_5$, $T_6$, $T_8$, and $T_{10}$. The built UO-list of $ (e) $ is shown in Fig. \ref{fig:FWTableOfE}(a). The FU-table of item $(e)$ is constructed efficiently by using the support count, utility occupancy and remaining utility occupancy. They are calculated during the construction of the UO-list of $(e)$, such that $\{sup(e)$ = 4, $ uo(e)$ = (0.1837 + 0.6207 + 0.5902 + 0.2143)/4 = 0.4022, and  $ ruo(e)$ = (0.8163 + 0.3793 + 0.4098 + 0.7857)/4 = 0.5978$\} $, and the results of its FU-table are  $\{sup(e)$ = 4,  $ uo(e)$ = 0.4022, and  $ ruo(e)$ = 0.5978$\} $, as shown in Fig. \ref{fig:FWTableOfE}(b). Thus, the final built FU-tables of all items (1-itemsets) are shown in Fig. \ref{fig:FP1}.

\begin{figure}[!htbp]
	\centering
	\includegraphics[scale=0.55]{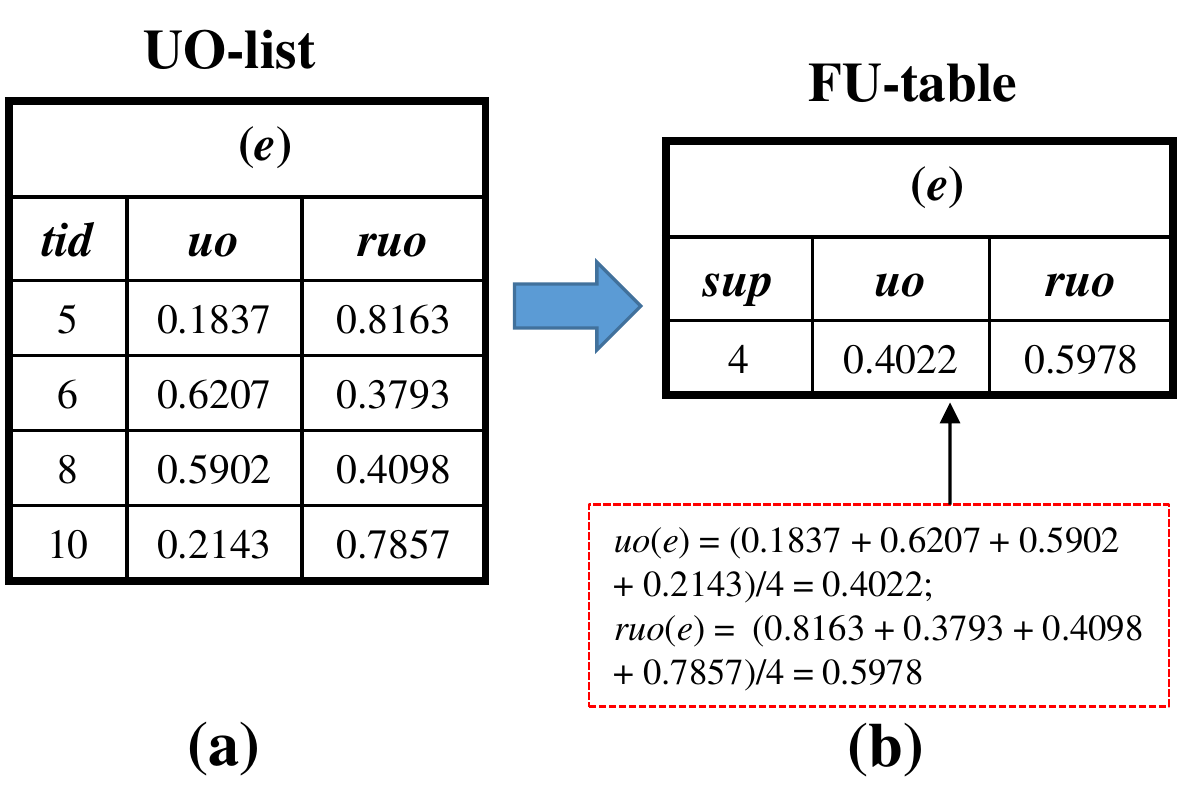}
	\caption{The UO-list and FU-table of item (e).}
	\label{fig:FWTableOfE}
\end{figure}

\begin{figure}[!htbp]
	\centering
	\includegraphics[scale=0.33]{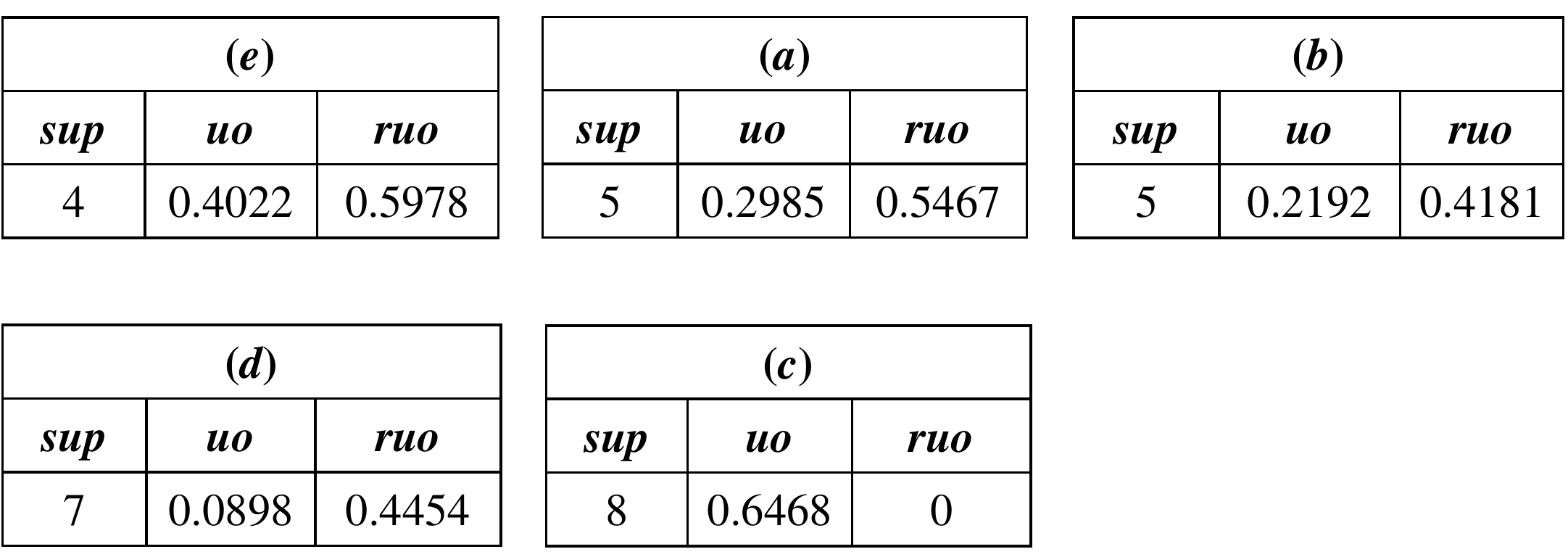}
	\caption{Constructed FU-tables of all 1-itemsets.}
	\label{fig:FP1}
\end{figure}

According to the two conditions $ sup(X) \geq \alpha \times |D| $ and $ uo(X) \geq \beta $ of HUOP, we only need to initially construct the UO-lists and FU-tables of  frequent 1-itemsets ($ FP^1 $). Then, for any $k$-itemset ($k \geq 2 $), the UO-list of this itemset can be calculated directly using the UO-lists of some of its subsets, without scanning the database. The construction procedure of the UO-list and FU-table of a $k$-itemset is shown in Algorithm 1. Given an itemset $X$, and two itemsets $X_a$ and $X_b$ which are extensions of $X$ obtained by adding two distinct items $a$ and $b$ to $X$, respectively. The construction procedure takes as input the UO-lists of $X$, $X_a$ and $X_b$, and outputs the UO-list and FU-table of the itemset $ X_{ab} $. Note that \textit{X$_{ab}$.UOL} denote the UO-list of $ X_{ab} $, and \textit{X$_{ab}$.FUT} denote the FU-table of $ X_{ab} $.

\begin{algorithm}
	\caption{Construction($ X $, $ X_{a} $, $ X_{b} $)}
	\begin{algorithmic}[1]
		\STATE set  $ X_{ab}.UOL \leftarrow \emptyset $, $ X_{ab}.FUT\leftarrow \emptyset $; 
		\FOR {each tuple $ E_{a}\in X_{a}.UOL $}	
		\IF {$ \exists E_{a}\in X_{b}.UOL \wedge E_{a}.tid == E_{b}.tid $}
		\IF{$ X.UOL \neq \emptyset $}			
		\STATE search for $ E\in X.UOL, E.tid = E_{a}.tid $;			

		\STATE $E_{ab}$ $\leftarrow$ $<E_{a}.tid$, $E_{a}.uo$ + $E_{b}.uo$ - $E.uo$, $E_{b}.ruo>$;
		\STATE $ X_{ab}.FUT.uo $ += $ E_{a}.uo + E_{b}.uo - E.uo $;
		\STATE $ X_{ab}.FUT.ruo $ += $ E_{b}.ruo $;
		\ELSE
		
		\STATE $E_{ab} \leftarrow <E_{a}.tid, E_{a}.uo + E_{b}.uo, E_{b}.ruo>$;
		\STATE $ X_{ab}.FUT.uo $ += $ E_{a}.uo + E_{b}.uo $;
		\STATE $ X_{ab}.FUT.ruo $ += $ E_{b}.ruo $;
		\ENDIF
		\STATE $ X_{ab}.UOL \leftarrow X_{ab}.UOL \cup E_{ab} $;
		\STATE $  X_{ab}.FUT.sup $ ++;	
		\ELSE		
		\STATE $  X_{a}.FUT.sup $ - -;
		\IF{$ X_{a}.FUT.sup < \alpha \times |D| $}
		\STATE \textbf{return} \textit{null};	
		\ENDIF
		\ENDIF
		\ENDFOR
		\STATE \textbf{return} $ X_{ab} $\
	\end{algorithmic}
\end{algorithm}

In Algorithm 1, Lines 17 to 20 show the developed pruning strategy named \textit{remaining support strategy}, which will be described later. It is important to notice that the construction of the UO-list and FU-table of a $k$-itemset ($ k \geq 3 $, Lines 5 to 8) is different than that of a $2$-itemset ($ k =2  $, Lines 10 to 12). Details of the difference are described in the following content. The $ uo$ of $X_{ab}$ is the sum of the $ uo$ associated with $tid$ in the UO-lists of $X_{a}$ and $X_{b}$. Suppose $a$ is before $b$, and then the $ ruo$ of $X_{ab}$ is assigned as the $ ruo$ associated with $tid $ in the UO-list of $X_{b}$. However, when calculating the $uo$ and $ruo$ for a $(k)$-itemset ($ k \geq 2 $), the part of $uo$ is different, thus it needs to subtract the $uo$ of the common part between $X_{a}$ and $X_{b}$. Generally, to calculate the $uo$ of $\{i_1 $\ldots$ i_{(k-2)}i_{(k-1)}i_k\} $ in $tid$, the following formula holds: $uo(\{i_1 \dot i_{(k-2)}i_{(k-1)}i_k\},tid) $ = $uo(\{i_1 $\ldots$ i_{(k-2)}i_{(k-1)}\}, tid) $ + $ uo(\{i_1 $\ldots$ i_{(k-2)}i_k\}, tid) $ - $ uo(\{i_1 $\ldots$ i_{(k-2)}\}, tid) $.
For example, in transaction $T_5$, $uo(ac, T_5)$ = $uo(a, T_5)$ + $uo(c, T_5)$ = 0.1429 + 0.4490 = 0.5919, $uo(ae, T_5)$ = $uo(a, T_5)$ + $uo(e, T_5)$ = 0.1429 + 0.1837 = 0.3266, and $uo(ace, T_5)$ = $uo(ac, T_5)$ + $uo(ae, T_5)$ - $uo(a, T_5)$ = 0.5919 + 0.3266 - 0.1429 = 0.7756. We can not miscalculate the $uo$ of $\{ace\}$ in $T_5$ such as the sum of the $uo$ of $\{ac\}$ and $\{ae\}$ in $T_3$, since they contain the $uo$ of $\{a\}$ in $T_5$ twofold.

The procedure can be easily implemented since a set of the UO-lists of the ($k$-1)-itemsets ($ k \geq 2 $) has been built before constructing the UO-list of a $(k)$-itemset ($ k \geq 2 $) w.r.t. the extension node.

\subsection{Upper Bound on Utility Occupancy}

Using the proposed UO-list and FU-table structures, the actual utility occupancy value of a pattern can be calculated exactly. But a crucial question is: is it necessary to construct the UO-lists and FU-tables of all patterns in a FU-tree? Since this would be very expensive in terms of runtime and memory usage, some pruning conditions should be developed to decide whether a subtree must be spanned. In a Set-enumeration tree \cite{rymon1992search}, the complete search space of $I$ (where $m$ is the number of items in $I$) contains $2^m$ patterns (by systematically enumerating all subsets of $I$ with the total order $\prec$, that is all possible patterns). As mentioned before, the \textit{downward closure} property does not hold for the utility occupancy. Thus, it is a crucial challenge to design powerful technologies to prune the search space and filter unpromising patterns early, especially to handle large-scale databases.

Without actually generating all possible high utility occupancy patterns, can we derive an upper bound $ \hat{\phi}(X) $  on the utility occupancy of patterns in a subtree rooted at node representing an itemset $X$? Unfortunately, it is not an easy task since the utility occupancy and the supporting transactions of patterns are unknown until these patterns are processed. 
Hence, inspired by the OCEAN algorithm \cite{shen2016ocean}, we try to develop an upper bound on the utility occupancy by utilizing the UO-list structure. If $ \hat{\phi}(X) $ is less than the minimum utility occupancy threshold, we can safely discard this subtree from further consideration. 

\begin{definition}
	\label{def_SetOfExtension}
	\rm Let there be an itemset $X$ and a transaction (or an itemset) $T$ such that $ X \subseteq T$, and that the set of all items in $T$ that are not in $X$ is denoted as $T \setminus X $. In addition, the set of all items appearing after $X$ in $T$ according to the $ \prec $ order is denoted as $T/X$. Thus, $ T/X \subseteq T  \setminus X $.  
\end{definition}

\begin{lemma}
	\label{lemma1}
	\rm Let there be a subtree rooted at $X$,  $ \varGamma_X $ be the supporting transactions of $X$. Then, for any possible high utility occupancy itemset $W$ in the subtree, we have:
	\begin{equation}
	uo(W) \leq \dfrac{\sum_{W \subseteq T_q \wedge T_q \in D}(uo(X,T_q) + ruo(X,T_q))}{|\varGamma_W|} .
	\end{equation}
	
\end{lemma}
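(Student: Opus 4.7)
The plan is to prove the bound by establishing a transaction-wise inequality $uo(W,T_q) \le uo(X,T_q) + ruo(X,T_q)$ for every $T_q$ that supports $W$, and then aggregating over the supporting transactions of $W$.

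First I would unpack what it means for $W$ to live in the subtree rooted at $X$. By the construction of the FU-tree using the total order $\prec$, any descendant $W$ of node $X$ is an extension of $X$ obtained by adjoining items that come strictly after $X$ in $\prec$; in particular $X \subseteq W$ and $W \setminus X \subseteq \{i_j \in I : X \prec i_j\}$. Consequently, for every transaction $T_q$ with $W \subseteq T_q$, both $X \subseteq T_q$ and $W \setminus X \subseteq T_q/X$ hold, where $T_q/X$ is the set of items of $T_q$ appearing after $X$ in $\prec$ (Definition~\ref{def_SetOfExtension}).

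Next I would exploit additivity of utility to split $u(W,T_q) = u(X,T_q) + \sum_{i_j \in W \setminus X} u(i_j,T_q)$. Since $W \setminus X \subseteq T_q/X$ and utilities are non-negative, the second sum is bounded above by $\sum_{i_j \in T_q,\, X \prec i_j} u(i_j,T_q)$. Dividing by $tu(T_q)$ and invoking Definition~\ref{def_woOfTq} and Definition~\ref{def_ruo} yields the key per-transaction estimate
\begin{equation*}
uo(W,T_q) \;\le\; uo(X,T_q) + ruo(X,T_q).
\end{equation*}

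Finally I would sum this inequality over $T_q \in \varGamma_W$ (which is a subset of $\varGamma_X$ since $X \subseteq W$, so all the terms on the right are well-defined) and divide by $|\varGamma_W|$. Applying Definition~\ref{def_6} to the left-hand side gives exactly the claimed upper bound on $uo(W)$. The only subtle point, and the one I expect to require the most care, is justifying the inclusion $W \setminus X \subseteq \{i_j \in T_q : X \prec i_j\}$ from the subtree structure; everything else is a direct computation from the definitions. Note that the bound uses $|\varGamma_W|$ (not $|\varGamma_X|$) as denominator, which is why the summation range on the right is restricted to transactions supporting $W$ rather than $X$.
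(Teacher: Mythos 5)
Your proposal is correct and follows essentially the same route as the paper's proof: decompose $uo(W,T_q)$ into $uo(X,T_q) + uo(W\setminus X,T_q)$, bound the second term by $ruo(X,T_q)$ using the fact that $W\setminus X = W/X \subseteq T_q/X$ (since descendants in the FU-tree only adjoin items after $X$ in $\prec$), and then average over $\varGamma_W$. Your write-up is if anything slightly more careful than the paper's, which states the inclusion $(W-X)=(W/X)$ without the explicit justification you supply.
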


\begin{proof}
	 Since $W$ is a $k$-extension of $ X$, $ \varGamma_W \subseteq \varGamma_X $, we have $ (W - X) = (W/X) $. Thus, in each transaction $T_q$, $ uo(W / X,T_q) \leq uo(T_q / X,T_q) $.	The following relationships can be obtained. For each transaction $T_q$ in $D$, since\\
\begin{tabbing}		
	$ uo(W)$ \= $= \dfrac{\sum_{W \subseteq T_q \wedge T_q \in D}uo(X,T_q)}{|\varGamma_W|} $\\
	\>$ = \dfrac{\sum_{W \subseteq T_q \wedge T_q \in D}(uo(X,T_q) + uo(W-X,T_q))}{|\varGamma_W|} $\\
	\>$ \leq \dfrac{\sum_{W \subseteq T_q \wedge T_q \in D}(uo(X,T_q) + uo(T_q/X,T_q))}{|\varGamma_W|} $\\ 
	
	$\Longrightarrow  uo(W) \leq	\dfrac{\sum_{W \subseteq T_q \wedge T_q \in D}(uo(X,T_q) + uo(T_q/X,T_q))}{|\varGamma_W|} $.\\
	$\Longrightarrow  uo(W) \leq	\dfrac{\sum_{W \subseteq T_q \wedge T_q \in D}(uo(X,T_q) + ruo(X,T_q))}{|\varGamma_W|} $.\\
\end{tabbing}				 	
\end{proof}

The supporting transactions of a processed node (itemset) $X$ is denoted as $ \varGamma_X $, but the supporting transactions of any possible high utility occupancy pattern $W$ extending $X$ w.r.t. $ \varGamma_W $, is unknown although $ \varGamma_W \subseteq \varGamma_X $.
The set $ \varGamma_W $ is unknown until $W$ is processed. Assume that we now explore the node $ (ea) $. After constructing the UO-list of $ (ea) $, we can obtain its support count and utility occupancy. Unfortunately, we do not know the related support count $ \varGamma_W $ of any of its extension nodes $W$. Because of Inequality (5), it is difficult to obtain an upper bound for $W$ without knowing  $ \varGamma_W $. The concept of HUOP indicates that every HUOP should be supported by $ \alpha \times |D| $ transactions in the database. By utilizing this property, we further develop the following theorems to obtain an upper bound for a subtree rooted at a processed node in the FU-tree.

\begin{lemma}
	\rm Let there be a minimum support threshold $ \alpha $, a subtree rooted at $X$, and $ \varGamma_X $ be its supporting transactions. For any pattern $W$ in the subtree, an upper bound on the utility occupancy of  $W$ is:
	\begin{equation}
	\hat{\phi}(W)  = \dfrac{\sum_{top \alpha \times |D|, T_q \in \varGamma_X}(uo(X,T_q) + ruo(X,T_q))}{|\alpha \times |D||}.
	\end{equation}

	\begin{equation}
	\hat{\phi}(W) \geq uo(W).
	\end{equation}
\end{lemma}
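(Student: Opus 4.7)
The plan is to derive the bound by chaining three inequalities: (i) the pointwise utility-occupancy bound from Lemma 1, (ii) the support constraint that every HUOP $W$ must satisfy, and (iii) a top-$k$ averaging argument. For convenience, write $f(T_q) = uo(X,T_q) + ruo(X,T_q)$ for each $T_q \in \varGamma_X$, so that $\hat{\phi}(W)$ is the average of the $\alpha \cdot |D|$ largest values of $f$ over $\varGamma_X$. The goal is to show $uo(W) \leq \hat{\phi}(W)$ for every $W$ in the subtree rooted at $X$ that can still qualify as a HUOP.

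First I would invoke Lemma 1 to obtain
\begin{equation*}
uo(W) \;\leq\; \frac{\sum_{T_q \in \varGamma_W} f(T_q)}{|\varGamma_W|}.
\end{equation*}
Next I would use the definition of HUOP (Definition \ref{def_HUOP}), which forces $|\varGamma_W| \geq \alpha \cdot |D|$; otherwise $W$ cannot be an HUOP and there is nothing to prove (the bound is only needed for pruning candidates that could be HUOPs). Since $W$ is an extension of $X$, we also have $\varGamma_W \subseteq \varGamma_X$, so $\varGamma_W$ is a subset of $\varGamma_X$ of size at least $\alpha \cdot |D|$.

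The key step is an averaging (rearrangement) argument on $f$ restricted to $\varGamma_X$. Let $k = |\varGamma_W|$ and let $\text{Top}_k(\varGamma_X)$ denote the $k$ transactions in $\varGamma_X$ with the largest $f$-values. Standard rearrangement gives
\begin{equation*}
\frac{\sum_{T_q \in \varGamma_W} f(T_q)}{k} \;\leq\; \frac{\sum_{T_q \in \text{Top}_k(\varGamma_X)} f(T_q)}{k},
\end{equation*}
because averaging over any size-$k$ subset cannot exceed averaging over the top-$k$ subset. Then I would observe that the function $k \mapsto \tfrac{1}{k}\sum_{T_q \in \text{Top}_k(\varGamma_X)} f(T_q)$ is non-increasing in $k$: when one moves from the top-$\alpha|D|$ to the top-$k$ with $k \geq \alpha|D|$, the new entries added all have $f$-values no larger than the running average, so the average can only drop or stay the same. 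Combining these two monotonicities yields
\begin{equation*}
\frac{\sum_{T_q \in \varGamma_W} f(T_q)}{|\varGamma_W|} \;\leq\; \frac{\sum_{T_q \in \text{Top}_{\alpha|D|}(\varGamma_X)} f(T_q)}{\alpha \cdot |D|} \;=\; \hat{\phi}(W),
\end{equation*}
and chaining with Lemma 1 finishes the proof.

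The main obstacle is the second monotonicity step — that the top-$k$ average of $f$ over $\varGamma_X$ is non-increasing in $k$. It is intuitively obvious but needs a one-line justification (the $(k{+}1)$-st largest value is by definition no greater than the current average of the top $k$, so appending it weakly decreases the mean). Everything else is bookkeeping on top of Lemma 1 and the HUOP frequency condition. One subtle point worth flagging in the write-up is the implicit assumption that $W$ satisfies $|\varGamma_W| \geq \alpha \cdot |D|$; the bound is meaningful precisely in the regime where $W$ is still a viable HUOP candidate, which is exactly the regime relevant for pruning the FU-tree.
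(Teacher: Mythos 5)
Your proposal is correct and follows essentially the same route as the paper's own proof: start from Lemma~1, use $\varGamma_W \subseteq \varGamma_X$ together with the HUOP support requirement $|\varGamma_W| \geq \alpha \times |D|$, bound the average over $\varGamma_W$ by the top-$|\varGamma_W|$ average over $\varGamma_X$, and then use the fact that the top-$k$ average is non-increasing in $k$ to replace $|\varGamma_W|$ by $\alpha \times |D|$. Your write-up is if anything slightly more careful than the paper's (you explicitly justify the monotonicity of the top-$k$ average and flag the implicit restriction to viable HUOP candidates), but the argument is the same.
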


\begin{proof}
	Note that $ \varGamma_W \subseteq \varGamma_X $ and $ \varGamma_W $  is unknown, we calculate  $ (uo(X,T_q) + ruo(X,T_q)) $  for all transactions in $ \varGamma_X $ and put into a vector set (denoted as $ V_{occu} $), then sort $ V_{occu} $ in descending order (denoted as $ V_{occu}^{\downarrow} $). Since the average of top $k$ ($ 0 < k \leq |\varGamma_W| $) values of vector $ V_{occu}^{\downarrow} $ is an upper bound of the average of total $ |\varGamma_W| $ values of $ \dfrac{\sum_{W \subseteq T_q \wedge T_q \in D}(uo(X,T_q) + ruo(X,T_q))}{|\varGamma_W|} $. 
	Since a high utility occupancy pattern should be supported by at least  $ \alpha \times |D| $ transactions, we have $ \alpha \times |D| \leq k  \leq |\varGamma_W|\leq |\varGamma_X| $. We have,
	\begin{tabbing}	
		$ uo(W) \leq \dfrac{\sum_{W \subseteq T_q \wedge T_q \in D}(uo(X,T_q) + ruo(X,T_q))}{|\varGamma_W|} $ \\ 
		
		$\Longrightarrow  uo(W) \leq	\dfrac{\sum_{top k, T_q \in \varGamma_X}\{uo(X,T_q) + ruo(X,T_q)\}^{\downarrow}}{|\varGamma_W|} $\\
		$\Longrightarrow  uo(W) \leq	\dfrac{\sum_{top \alpha \times |D|, T_q \in \varGamma_X}\{uo(X,T_q) + ruo(X,T_q)\}^{\downarrow}}{|\alpha \times |D||} $\\
		$\Longrightarrow  uo(W) \leq	\hat{\phi}(W) $.\\	
		
	\end{tabbing}	
	Thus, given a minimum support threshold $ \alpha $, we can directly calculate an upper bond $ \hat{\phi}(W) $ on utility occupancy of a subtree which rooted at a processed node $X$.
\end{proof}

\subsection{Proposed Pruning Strategies}
In this section we will show how to efficiently prune the search space  using an upper bound on the utility occupancy and the support count. Two properties named \textit{global downward closure} property and \textit{partial downward closure} property can be obtained below.

\begin{lemma}
	\label{lemma_support}
	\rm The complete search space of the addressed HUOPM problem can be represented by a FU-tree where items are sorted according to the support ascending order on items.
\end{lemma}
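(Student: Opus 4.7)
My plan is to establish this as a structural completeness result for the sorted set-enumeration tree underlying the FU-tree, independent of the particular total order chosen. Since the lemma is about representability of the search space (all subsets of $I$), and not about any utility-related quantity, the proof reduces to showing a bijection between $2^{I}$ and the node set of the FU-tree built under the support-ascending order $\prec$.

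First, I would recall Definition \ref{def_11}: the FU-tree is a sorted set-enumeration tree in the sense of Rymon \cite{rymon1992search} where children are generated by extending a node with items strictly greater than all items currently present, according to the fixed total order $\prec$. I would then fix $\prec$ to be the support ascending order on $I = \{i_1, i_2, \ldots, i_m\}$, relabelled so that $i_1 \prec i_2 \prec \cdots \prec i_m$. Every itemset $X \subseteq I$ can therefore be written in exactly one canonical form $X = \{i_{j_1}, i_{j_2}, \ldots, i_{j_k}\}$ with $j_1 < j_2 < \cdots < j_k$.

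The main step is to define the map $\Phi: 2^{I} \to \text{Nodes}(\text{FU-tree})$ sending $X$ to the node reached from the root $\emptyset$ by successively appending $i_{j_1}, i_{j_2}, \ldots, i_{j_k}$. I would verify two properties in turn: (i) $\Phi$ is well-defined, because at each step the appended item is greater in $\prec$ than all items already on the path, which matches the extension rule of the set-enumeration tree; and (ii) $\Phi$ is a bijection, because each node of the FU-tree is reached from the root along a unique monotonically increasing sequence of items in $\prec$, which recovers a unique subset. Together these give the one-to-one correspondence between $2^{I}$ and the nodes of the FU-tree, so every candidate pattern appears exactly once in the tree.

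There is no real obstacle here; the argument is a straightforward appeal to the basic property of set-enumeration trees. The only subtlety worth stating explicitly is that the support ascending order is just one admissible instance of $\prec$: the completeness is a consequence of $\prec$ being any strict total order, as noted after Definition \ref{def_10}. Hence choosing the support-ascending order does not compromise completeness (it is motivated only by the pruning efficiency discussed in subsequent lemmas). I would close by remarking that the FU-tree can therefore be traversed by DFS or BFS to visit every itemset in $2^{I}$ once, which is what is meant by ``representing the complete search space.''
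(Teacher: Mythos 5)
Your proposal is correct and takes essentially the same approach as the paper: the paper's proof is a one-line appeal to Rymon's set-enumeration tree result, and your bijection between $2^{I}$ and the nodes of the FU-tree is just the standard argument underlying that citation, spelled out in full. Your closing remark that completeness holds for any strict total order (with support-ascending chosen only for efficiency) matches the paper's own discussion following Definition~\ref{def_10}.
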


\textit{Proof.} According to the studies in \cite{rymon1992search}, the complete search space for mining HUOPs can be presented as a Set-enumeration tree.

\begin{theorem}
	\label{theorem_GDC}
	\rm \textbf{(\textit{Global downward closure property in the FU-tree})} 
	In the designed FU-tree, if a tree node is a FP, its parent node is also a FP. Let $ X^k $ be a $k$-itemset (node) and its parent node be denoted as $ X^{k-1}$, which is a ($k$-1)-itemset. The relationship $ sup(X^k) \leq sup(X^{k-1}) $ holds.
\end{theorem}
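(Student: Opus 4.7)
The plan is to reduce this statement to the classical Apriori anti-monotonicity of support, which is just the observation that a superset cannot appear in more transactions than any of its subsets. The content of Theorem \ref{theorem_GDC} really has two assertions glued together: first, a numerical inequality $sup(X^k)\leq sup(X^{k-1})$; and second, the logical consequence that frequency propagates upward in the FU-tree. Both follow from a single subset relation, so I would structure the proof around establishing that relation.

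First, I would unpack how parent-child relations work in the FU-tree. By Definition \ref{def_11} and Definition \ref{def_12}, the FU-tree is a sorted Set-enumeration tree under the total order $\prec$, and every child node is obtained from its parent by appending exactly one item that is $\prec$-larger than all items already in the parent. Hence if $X^{k-1}$ is the parent of $X^k$, there exists an item $i^*$ with $X^k = X^{k-1}\cup\{i^*\}$ and in particular $X^{k-1}\subset X^k$. This is the only structural fact from the FU-tree that is needed.

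Next I would invoke Definition \ref{def_1}, which defines $sup(Y)=|\varGamma_Y|$ where $\varGamma_Y=\{T_q\in D : Y\subseteq T_q\}$. From $X^{k-1}\subset X^k$ it follows immediately that any transaction $T_q$ containing $X^k$ also contains $X^{k-1}$, so $\varGamma_{X^k}\subseteq \varGamma_{X^{k-1}}$, and therefore
\begin{equation}
sup(X^k)=|\varGamma_{X^k}|\leq |\varGamma_{X^{k-1}}|=sup(X^{k-1}).
\end{equation}
Combining this with the frequency criterion $sup(\cdot)\geq \alpha\times |D|$ of Definition \ref{def_1}, if $X^k$ is an FP then $sup(X^{k-1})\geq sup(X^k)\geq \alpha\times |D|$, so $X^{k-1}$ is an FP as well, giving the stated upward propagation.

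Honestly, there is no real obstacle here: the only place where one must be careful is to check that the FU-tree's parent-child edges really do correspond to proper subset relations (and not, say, to some reordering), since the total order $\prec$ is support-ascending rather than lexicographic. Once Definition \ref{def_11} is interpreted properly as a sorted Set-enumeration tree, the subset containment is automatic, and the rest is just the usual anti-monotonicity of $|\cdot|$ under set inclusion.
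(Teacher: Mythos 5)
Your proof is correct and takes essentially the same route as the paper: the paper simply cites the well-known Apriori property, which is exactly the anti-monotonicity of support under set inclusion that you derive explicitly from $\varGamma_{X^k}\subseteq\varGamma_{X^{k-1}}$. Your version is just a more detailed unpacking of that citation, including the (correct) check that FU-tree parent-child edges are genuine subset relations.
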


\begin{proof}
	According to the well-known Apriori property~\cite{agrawal1994fast}, the relationship $ sup(X^k) \leq sup(X^{k-1}) $ exists. Thus, in the FU-tree, the \textit{global downward closure} property holds. 
\end{proof}

\begin{theorem}
	\label{theorem_CDC}
	\rm \textbf{(\textit{Partial  downward closure property in the FU-tree})} 
	In the designed FU-tree, let $ X^k $ be a $k$-itemset (node) and its parent node be denoted as $ X^{k-1} $, a ($k$-1)-itemset. The relationship indicating that the upper bound on utility occupancy of any node in a subtree is no greater than that of its parent node always holds, that is $ \hat{\phi}(X^k) \leq \hat{\phi}(X^{k-1}) $.
\end{theorem}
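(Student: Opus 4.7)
The plan is to reduce the comparison of the two upper bounds to (a) a pointwise per-transaction inequality plus (b) a monotonicity argument for the top-$\alpha|D|$ sum that defines $\hat\phi$. Write $X^k = X^{k-1}\cup\{i\}$, where, by the construction of the FU-tree, $i=\max_{\prec}(X^k)$ is the unique item used to extend $X^{k-1}$; in particular $j\prec i$ for every $j\in X^{k-1}$. Since $X^{k-1}\subseteq X^k$ one immediately has $\varGamma_{X^k}\subseteq\varGamma_{X^{k-1}}$. Also, for both nodes to even enter the FU-tree they must be frequent, so $|\varGamma_{X^{k-1}}|\ge|\varGamma_{X^k}|\ge\alpha|D|$ and the top-$\alpha|D|$ sums in $\hat\phi$ are well defined on each side.

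First I would prove the pointwise inequality
\[
uo(X^{k-1},T_q)+ruo(X^{k-1},T_q)\ \ge\ uo(X^k,T_q)+ruo(X^k,T_q)\qquad(T_q\in\varGamma_{X^k}).
\]
The intuition is that extending $X^{k-1}$ by $i$ merely moves $u(i,T_q)/tu(T_q)$ from the ``remaining'' part into the ``occupied'' part, and may additionally discard the utility-occupancy of items that lie strictly between $\max(X^{k-1})$ and $i$ in the $\prec$-order. Concretely, from Definition~\ref{def_woOfTq} we get $uo(X^k,T_q)=uo(X^{k-1},T_q)+uo(\{i\},T_q)$, and from Definition~\ref{def_ruo} we can split
\[
ruo(X^{k-1},T_q)=uo(\{i\},T_q)+E(T_q)+ruo(X^k,T_q),
\]
where $E(T_q)\ge 0$ gathers the utility-occupancy of the items of $T_q$ strictly between $\max(X^{k-1})$ and $i$. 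Substituting and cancelling gives the claim, with slack exactly $E(T_q)$.

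Next, I would combine the pointwise bound with the set inclusion $\varGamma_{X^k}\subseteq\varGamma_{X^{k-1}}$ to compare the top-$\alpha|D|$ sums in $\hat\phi(X^k)$ and $\hat\phi(X^{k-1})$. Since the top-$\alpha|D|$ sum of a nonnegative function is monotone both in the function (pointwise) and in the underlying index set, two chained applications give
\[
\sum_{\mathrm{top}\,\alpha|D|,\,T_q\in\varGamma_{X^k}}\!\bigl(uo(X^k,T_q)+ruo(X^k,T_q)\bigr)\ \le\ \sum_{\mathrm{top}\,\alpha|D|,\,T_q\in\varGamma_{X^{k-1}}}\!\bigl(uo(X^{k-1},T_q)+ruo(X^{k-1},T_q)\bigr).
\]
Dividing both sides by the common denominator $\alpha|D|$ yields $\hat\phi(X^k)\le\hat\phi(X^{k-1})$, which is the partial downward closure property.

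The main obstacle will be the bookkeeping in the pointwise step: because $ruo$ is defined via ``items appearing after $X$ in $T_q$'' under $\prec$, I must be careful about what happens to items that sit strictly between $\max(X^{k-1})$ and $i$ in a given transaction — these contribute the nonnegative slack $E(T_q)$ that only strengthens the inequality, but omitting them would make the identity wrong. Once the pointwise inequality is established, the top-$\alpha|D|$ monotonicity step is routine, and no additional assumption on $\varGamma_{X^k}$ or on the unknown supporting set of deeper extensions is needed because $\hat\phi$ itself is defined purely from data available at $X^{k-1}$ and $X^k$.
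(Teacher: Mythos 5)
Your proof is correct and follows essentially the same route as the paper: the inclusion $\varGamma_{X^k}\subseteq\varGamma_{X^{k-1}}$ combined with monotonicity of the top-$\alpha\times|D|$ average. You are in fact more careful than the paper's own argument, which compares the two sorted vectors $V_{occu}^{\downarrow}$ citing only the subset relation and silently assumes the per-transaction inequality $uo(X^k,T_q)+ruo(X^k,T_q)\le uo(X^{k-1},T_q)+ruo(X^{k-1},T_q)$, whereas you derive it explicitly from the decomposition of $ruo(X^{k-1},T_q)$ into the occupancy of the extending item, the items strictly between it and $\max_{\prec}(X^{k-1})$, and $ruo(X^k,T_q)$.
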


\begin{proof}
	 Since the UO-list of $ X^k $  is constructed by joining the UO-list of $ X^{k-1} $ with the one of a sibling of $ X^{k-1} $, we have $ \varGamma_{X^k} \subseteq \varGamma_{X^{k-1}} $. The average of the top $ k $ ($ 0 < k \leq |\varGamma_{X^k}| $ values in the vector $ V_{occu}^{\downarrow} $ of $ X^k $ is no greater than the average of top $ k$ ($ 0 < k \leq |\varGamma_{X^{k-1}}| $ values of vector $ V_{occu}^{\downarrow} $ of $ X^{k-1} $. According to Lemma 2, the relationship between upper bounds on the utility occupancy of $ X^k $ and $ X^{k-1} $ is: $ \hat{\phi}(X^k) \leq \hat{\phi}(X^{k-1}) $. As a result, if $ \hat{\phi}(X^{k}) \geq \beta $, then $ \hat{\phi}(X^{k-1}) \geq \hat{\phi}(X^{k}) \geq \beta$. Conversely, if $ \hat{\phi}(X^{k-1}) < \beta$, then $ \hat{\phi}(X^{k}) \leq \hat{\phi}(X^{k-1}) < \beta$. In other words, the obtained upper bound on the utility occupancy of patterns satisfies the \textit{partial downward closure} property. Thus, in the developed FU-tree, the \textit{partial downward closure} property holds.
\end{proof}

It is noteworthy that   ``\textit{partial}'' here indicates that the upper bound on the utility occupancy is conditional anti-monotone but not general anti-monotone. In other words, the \textit{partial downward closure} property holds for a tree node and its descendants, but does not hold for supersets which are not in that subtree. To further enhance the pruning effect of the proposed HUOPM algorithm, the derived upper bound can be utilized with this anti-monotone  property. We next present four novel pruning strategies, which were not used in the state-of-the-art OCEAN algorithm \cite{shen2016ocean}.

\begin{strategy}
	In the designed FU-tree and by considering the defined total order $\prec$, if a tree node $X$ has a support count less than ($ \alpha \times |D| $), then any nodes containing $X$ (i.e., all supersets of $X$) can be directly pruned and do not need to be explored.
\end{strategy}

\begin{example} 
	In the running example, assume that the parameters are set to $ \alpha $ = 30\% and  $ \beta $ = 0.3. Using the UO-lists of $(e)$ and $(a)$, we can construct the UO-list of $(ea)$. Since $ sup(ea) = 2 < (\alpha  \times 10  = 30\% \times 10 = 3$), all extension nodes of $(ea)$ are not HUOPs and can be directly pruned.
\end{example}

\begin{strategy}
	In the designed FU-tree, considering the defined total order $\prec$, if the upper bound on the utility occupancy of a tree node $X$ is less than $ \beta $, then any nodes in the subtree rooted at $X$ w.r.t. all extensions of $X$ can be directly pruned and do not need to be explored.
\end{strategy}

\begin{example} 
	After constructing the UO-list of $ (ab) $, we can obtain its utility occupancy (\textit{uo}) and remaining utility occupancy (\textit{ruo}) in each supporting transaction. According to inequality (5), the top 3 values in the sorted vector $ V_{occu}^{\downarrow} $ of $ (ab) $ are considered. Since the UO-list of $(ab)$ is $\{(T_3$, 0.6579, 0.3421), $(T_5$, 0.2653, 0.5510), $(T_8$, 0.3771, 0.0328)$\}$, $ V_{occu}^{\downarrow} $ of $ (db) $ =  $\{(T_3$, 1.0000), $(T_5$, 0.8163), $(T_8$, 0.4099)$\}$. Thus, the utility occupancy upper bound of ($ab$) is (1.0000 + 0.5380 + 0.4099)/3 $\approx$ 0.7421 $>$ 0.3, and $ sup(ab)$ = 3. Thus, its extension nodes  ($abd$), ($abc$) and ($abdc$) may be HUOPs, and hence should be explored.		
\end{example} 

\begin{strategy}
	During the construction of the UO-list of a tree node $X_{ab}$ by using $X_{a}$ and $X_{b}$, if the remaining support of $X_{a}$ for constructing $X_{ab}$ is less than $ \alpha \times |D|$, the support of $X_{ab}$ will also be less than $ \alpha \times |D|$,  $X_{ab}$ is not a FP, and also not a HUOP. Then the construction procedure returns \textit{null}, as shown in Algorithm 1 Lines 18 to 20. 
\end{strategy}

\begin{strategy}
	After calling the construction procedure to build the UO-list of a tree node $X$, if \textit{X.UOL} is empty or  $ sup(X) \leq \alpha \times |D|$, $ X$ is not a HUOP, and none of its subtree nodes is a HUOP. Then, \textit{X.UOL} is not added to the set of extension UO-lists of $X$. 
\end{strategy}

\begin{example} 
	In the running example, the node $(ea)$ has a support count of 2. Its subtree nodes can be pruned by Strategy 4. In this case, Strategy 4 has the same effect as Strategy 1. 
\end{example}

\subsection{Proposed HUOPM Algorithm}

Based on two compact data structures (UO-list and FU-table), an upper bound on utility occupancy and the above pruning strategies, the proposed HUOPM algorithm can be designed below. Algorithm 2 shows the pseudo-code of the designed HUOPM algorithm. It takes four parameters as input, including a transactional database $D$, a predefined profit table \textit{ptable}, a user-specified minimum support threshold $ \alpha $ ($ 0 < \alpha \leq 1 $), and a minimum utility occupancy threshold  $ \beta $ ($ 0 < \beta \leq 1 $). HUOPM first scans the database to calculate the $sup(i)$ of each item $ i \in I $ and the $tu$ value of each transaction (Line 1), then finds the set  $ I^* $ w.r.t. $FP^1$ (Line 2, Strategy 1), and sorts it according to the total order $\prec$ (Line 3). After that, the algorithm scans $D$ again to construct the UO-list and FU-table of each item $ i \in I^* $ (Line 4). At last, HUOPM recursively applies the designed \textbf{\textit{HUOP-Search}} function for mining HUOPs without generating candidates and without scanning the database repeatedly (Line 5).

\begin{algorithm}
	\label{HUOPM-algorithm}
	\caption{HUOPM (\textit{D}, \textit{ptable}, $\alpha$, $\beta$)}
	\begin{algorithmic}[1]		
		\STATE scan $D$ to calculate the $sup(i)$ of each item $ i \in I $ and the $tu$ value of each transaction;
		\STATE find $ I^* \gets \left\{  i \in I | sup(i) \geq \alpha \times |D| \right\} $, w.r.t. $ FP^1 $;
		\STATE sort $ I^* $ in the designed total order $ \prec $;
		\STATE using the total order $ \prec $, scan $D$ once to build the UO-list and FU-table for each 1-item $ i\in I^*$;
		\STATE \textbf{call \textit{HUOP-Search}}($\phi, I^*, \alpha, \beta $).		
		\STATE \textbf{return} \textit{HUOPs}		
	\end{algorithmic}
\end{algorithm}

\begin{algorithm}
	\label{HUOP-Search procedure}
	\caption{HUOP-Search ($X$, $\textit{extenOfX}$, $\alpha$, $\beta$)}
	\begin{algorithmic}[1]			
		\FOR {each itemset $ X_{a}\in $ $ \textit{extenOfX} $}	
		\STATE obtain $ sup(X_a) $ and $ uo(X_a) $  from the built $ X_{a}.FUT $;
		\IF{$ sup(X_a) \geq \alpha \times |D| $}
		
		\IF{$ uo(X_a)\geq \beta $}			 
		\STATE $ HUOPs\leftarrow HUOPs\cup X_{a} $;	
		\ENDIF
		
		\STATE	$ \hat{\phi}(X_a) \leftarrow  \textbf{\textit{UpperBound}}(X_a.UOL, \alpha) $;
		\IF{$ \hat{\phi}(X_a) \geq \beta $}
		
		\STATE $ \textit{extenOfX}_{a}\leftarrow  \emptyset $;
		\FOR {each $ X_{b}\in \textit{extenOfX} $ that $ X_{a} $   $ \prec $  $ X_{b} $}
		
		\STATE $ X_{ab}\leftarrow X_{a} \cup X_{b} $;
		
		\STATE call $ \textbf{\textit{Construct}}(X, X_{a}, X_{b}) $;
		\IF{$ X_{ab}.UOL \not= \emptyset \wedge sup(X_{ab}) \geq \alpha \times |D| $}
		
		\STATE $ \textit{extenOfX}_{a}\leftarrow \textit{extenOfX}_{a}\cup X_{ab}.UOL $;		
		\ENDIF			
		\ENDFOR		 						  		
		\STATE \textbf{call \textit{HUOP-Search}}$\boldmath{(X_{a}, \textit{extenOfX}_{a}, \alpha, \beta)}$;	
		\ENDIF
		\ENDIF
		\ENDFOR
		\STATE \textbf{return} \textit{HUOPs}
	\end{algorithmic}
\end{algorithm}

During the recursive exploration of the search space (i.e., FU-tree) with the total order $\prec$, the partial anti-monotonicity of HUOPs is used to effectively avoid generating unpromising patterns that cannot be HUOPs and their child nodes. Based on the proposed pruning strategies, the upper bound can be used to prune patterns with low support or low utility occupancy early, without constructing their UO-lists. It can effectively reduce both the computational cost of join operations and the search space in the FU-tree. In fact, the actual search space is more compact than a complete FU-tree.

Details of the \textbf{\textit{HUOP-Search}} procedure is shown in Algorithm 3. It takes as input an itemset $X$, a set of UO-lists of all 1-extensions of $X$, $ \alpha $ and $ \beta $. A loop is first performed over each 1-extension of itemset $ X $, denoted as $X_{a}$ (Lines 1 to 20). Two conditions are checked (Lines 3 to 4, Strategy 1) to determine if $X_{a}$ is a HUOP. If the itemset is a HUOP, then it is put into the set of HUOPs. The algorithm then determines whether extensions of $X_{a}$ should be explored (Lines 7 to 8, Strategy 2). It uses the function \textit{\textbf{UpperBound}} (c.f. Algorithm 4) to calculate the upper bound of utility occupancy for the subtree rooted at pattern $X_{a}$. If $ \hat{\phi}(X_a) \geq \beta $, then any extensions of $X_{a}$ may be a HUOP. In that case, the depth-first search is performed (Lines 8 to 18). 
The construction procedure  \textit{\textbf{Construct(X, $X_a$, $X_b$)}} is then executed for each 1-extension node $X_{ab}$, to construct their UO-lists and FU-tables (Lines 11 to 12). Note that each constructed itemset $X_{ab}$ is an 1-extension of the itemset $X_a$. If $ X_{ab}.UOL \not= \emptyset $ and $ sup(X_{ab}) \geq \alpha \times |D| $, such itemset is added to a set  $\textit{extenOfX}_a $ for storing all 1-extensions of $X_a$ (Lines 13 to 15, Strategy 4). HUOPM recursively applies the designed \textbf{\textit{HUOP-Search}} function for determining the desired HUOPs (Line 17).

\begin{algorithm}
	\label{UpperBound procedure}
	\caption{UpperBound ($ X_q.UOL $, $ \alpha $)}
	\begin{algorithmic}[1]				
		\STATE $ sumTopK \leftarrow 0, \hat{\phi}(X_a) \leftarrow 0, V_{occu} \leftarrow \emptyset $;
		\STATE calculate $ (uo(X,T_q) + ruo(X,T_q)) $ of each tuple from the built $ X_{a}.UOL $ and put them into the set of $ V_{occu} $;
		\STATE sort $ V_{occu} $ by descending order as $ V_{occu}^{\downarrow} $;
		
		\FOR {$ k \leftarrow $ 1 to  $\alpha \times |D| $ in $ V_{occu}^{\downarrow} $}		
		\STATE $ sumTopK \leftarrow  sumTopK +  V_{occu}^{\downarrow}[k]  $;					
		\ENDFOR
		
		\STATE $ \hat{\phi}(X_a) = \dfrac{sumTopK}{\alpha \times |D|} $.	
		
		\STATE \textbf{return} $ \hat{\phi}(X_a) $
	\end{algorithmic}
\end{algorithm}

Algorithm 4 provides the pseudo-code of \textit{\textbf{UpperBound}} procedure, for computing the upper bound of utility occupancy. The complexity of Algorithm 4 is $ O(n \times log(n) \times V_{occu}^{\downarrow}[k]) $ (the top $k$ value in vector  is equal to $ \alpha \times |D| $). We first use $ O(n \times log(n)) $  time for sorting $ V_{occu} $ with quick sort algorithm in Line 3, then use $ O(n \times log(n) \times V_{occu}^{\downarrow}[k]) $ time for the loop operation from Lines 4 to 7 to calculate the top $k$ values in $ V_{occu}^{\downarrow}[k]) $  (Lines 4 to 5). Finally, it returns the average value of the summation as the upper bound on utility occupancy of the subtree rooted at pattern $X_a$ (Lines 4 to 5).

\section{Experiments} 
\label{sec:experiments}

In this section, extensive experiments are conducted on both real-world and synthetic datasets to evaluate the effectiveness and efficiency of the proposed HUOPM algorithm. The task of the addressed problem is to discover high utility occupancy patterns (HUOPs) with frequency and utility occupancy. Note that  only two prior studies, DOFIA \cite{tang2012incorporating} and OCEAN \cite{shen2016ocean} are closely related to our work. Major differences are that DOFIA considers both the frequency and occupancy for mining high qualified patterns, but the occupancy is based on the number of items in patterns or transactions rather than the concept of utility  \cite{tang2012incorporating}. Thus, in the following, the state-of-the-art OCEAN algorithm \cite{shen2016ocean} is implemented to generate high utility occupancy patterns.  HUOPs* are derived by OCEAN, and  HUOPs are generated by the proposed HUOPM algorithm. 

We first evaluate the mining results by comparing HUOPs* and HUOPs. Then the runtime and the number of visited nodes in the search tree of the HUOPM algorithm are compared. At last, the effect of the processing order of items is also compared and evaluated.

\subsection{Experimental Setup and Datasets}
All algorithms in the experiments are implemented using the Java language and executed on a PC with an Intel Core i5-3460 3.2 GHz processor and 4 GB of memory, running on the 32 bit Microsoft Windows 7 platform. Four real-world datasets \cite{fimdatasets} (BMSPOS2, retail, chess, and mushroom) and two synthetic dataset  \cite{IBMdata} (T10I4D100K and T40I10D100K) are used in the experiments. T10I4D100K and T40I10D100K are generated using the IBM Quest Synthetic Data Generator \cite{IBMdata}. These datasets have varied characteristics and represents the main types of data typically encountered in real-life scenarios (e.g., dense, sparse and long transactions). The characteristics of these datasets are described below in details.

\begin{itemize}
	\item  \emph{BMSPOS2} dataset contains several years worth of point-of-sale data from a large electronics retailer. It has total 515,597 transactions with 1,657 distinct items, an average transaction length of 6.53 items.
	\item \emph{retail} is a sparse dataset which contains 88,162 transactions with 16,470  distinct items and an average transaction length of 10.30 items.
	\item  \emph{chess} dataset contains 3,196 transactions with 75 distinct items and an average transaction length of 36 items. It is a dense dataset.
	\item  \emph{mushroom} is a dense dataset, it has 8,124 transactions with 120 distinct items, and an average transaction length of 23 items.
	\item  \emph{T10I4D100K} is a synthetic dataset that contains 870 distinct items, 100,000 transactions, and has an average length of 10.1 items.
	\item  \emph{T40I10D100K} is a synthetic dataset that contains 942 distinct items, 100,000 transactions, and has an average length of 39.6 items.
\end{itemize}


For the addressed utility-based HUOPM problem, note that the quantity and unit profit of each item in a dataset is randomly generated by using a simulation method proposed in previous studies \cite{liu2012mining,tseng2010up,tseng2013efficient}. In the following experiments, the proposed HUOPM algorithm with different designed pruning strategies is respectively denoted as HUOPM$ _{P12} $ (with pruning strategies 1 and 2), HUOPM$ _{P13} $ (with pruning strategies 1 and 3), HUOPM$ _{P123} $ (with pruning strategies 1, 2 and 3) and HUOPM$ _{P1234} $ (with all pruning strategies). The four versions are compared to evaluate the efficiency of HUOPM and the effect of the designed pruning strategies.

\subsection{Pattern Analysis}
In the study presenting OCEAN, it was shown that the diversity of patterns with high utility occupancy is more favorable than traditional high utility patterns and frequent patterns. The study used a small real-life dataset \textit{MobileApp} to illustrate that the derived patterns are more useful for the task of mobile App recommendation or promotion \cite{shen2016ocean}.  To analyze the usefulness of the proposed HUOPM framework, the derived patterns (HUOPs* and HUOPs) are evaluated below. When setting $ \alpha $:20\% and $ \beta $:0.50 on the running example, the top-10 patterns derived by HUOPM, OCEAN and FP-growth \cite{han2004mining} are shown in TABLE \ref{table:example}, respectively. Notice that the results in left, middle and right of TABLE \ref{table:example} are sorted by the descending order of \textit{utility occupancy} (\textit{uo}) and support (\textit{sup}), respectively. It is clearly that three real HUOPs of top-10 results ($(abde)$, $(abe)$, and $(cd)$, as shown in TABLE \ref{table:example} (left)) are missed in the results of OCEAN (as shown in TABLE \ref{table:example} (middle)). Besides, the minimum \textit{utility occupancy} in top-10 HUOPs* is 0.5604, while the minimum \textit{utility occupancy} value in top-10 HUOPs is 0.6554. Some interesting desired patterns with high \textit{uo} values cannot be discovered by OCEAN. Therefore, the results derived by the OCEAN algorithm are incomplete and OCEAN also encounters some performance problems (it will be discussed in the next subsection). Finally, it can be clearly seen that the support-based frequent patterns (as shown in TABLE \ref{table:example} (right)) are quite different from the results derived by the utility-occupancy-based methods.

\begin{table}[htb]
	\fontsize{5.5pt}{9pt}\selectfont
	\centering
	\caption{Derived patterns from running example}
	\label{table:example}
	\begin{tabular}{lll||lll||lll}
		\hline\hline
		\multicolumn{3}{c}{(\textbf{left}) Top-10 HUOPs} &
		\multicolumn{3}{c}{(\textbf{middle}) Top-10 HUOPs*} &
		\multicolumn{3}{c}{(\textbf{right}) Top-10 FPs} \\
		\hline
		\textbf{Pattern} &  \textit{\textbf{sup}} &  \textit{\textbf{uo}${\downarrow}$}  & \textbf{Pattern} &  \textit{\textbf{sup}} &  \textit{\textbf{uo}${\downarrow}$} & \textbf{Pattern} &  \textit{\textbf{sup}${\downarrow}$} &  \textit{\textbf{uo}}   \\ \hline
		
		$(abcd)$ & 2 & 0.9081 & $(abcd)$ & 2 & 0.9081 & $(c)$ & 8 & 0.6468 \\
		$(acd)$ & 4 & 0.8972 & $(acd)$ & 4 & 0.8972  & $(d)$ &　7 & 0.0897\\
		$(ce)$ & 3 & 0.8776 & $(ce)$ & 3 & 0.8776　 & $(a)$ &5  & 0.2985 \\
		$(abc)$ & 2 & 0.8308 & $(abc)$ & 2 & 0.8308  & $(b)$ & 5 & 0.2192 　\\
		$(ac)$ & 4 & 0.8273 & $(ac)$ & 4 & 0.8273  & $(ad)$ & 5 &　0.3609 \\
		$(abde)$ & 2 & 0.7755 & 	$(ade)$ & 2 & 0.6979   & $(cd)$ & 5 & 0.6881　\\
		$(abe)$ & 2 & 0.7081 & $(bc)$ & 3 & 0.6554	 & $(e)$ & 4 &　0.4022 \\	
		$(ade)$ & 2 & 0.6979 & $(c)$ & 8 & 0.6468	  & $(ac)$ & 4 &　0.8273 \\	
		$(cd)$ & 5 & 0.6881 & $(ae)$ & 2 & 0.6305	 & $(bd)$ & 4 &　0.3620 \\
		$(bc)$ & 3 & 0.6554 & $(bcd)$ & 2 & 0.5604  & $(acd)$ & 4 & 0.8972　\\ 		
		\hline\hline
	\end{tabular}
\end{table}

We further compare the number of patterns on the test datasets to show the effect of varying the parameters  $ \alpha $  and  $ \beta $. Results for various parameter values are shown in TABLE \ref{table:patterns1} and TABLE \ref{table:patterns2}, respectively. Notice that the parameters in TABLE \ref{table:patterns1} are varying $ \alpha $ under a fixed $\beta$ = 0.3, and the parameter settings on each dataset are the same as them in Fig. \ref{fig:Runtime1}. In TABLE \ref{table:patterns2}, the parameters are varying $ \beta $ under a fixed $\alpha$, and details of the parameter settings in each dataset are the same as them in Fig. \ref{fig:Runtime2}.

\begin{table}[htb]
	\fontsize{5.5pt}{9pt}\selectfont
	\centering
	\caption{Derived patterns under varied $ \alpha $}
	\label{table:patterns1}
	\begin{tabular}{c|c|llllll}
		\hline\hline
		\multirow{2}*{\textbf{Dataset}}&
		\multirow{2}*{\textbf{Patterns}}
		&\multicolumn{6}{c}{\# \textbf{patterns by varying threshold $ \alpha $}}\\
		\cline{3-8}
		&&$ \alpha_1 $ & $ \alpha_2 $ & $ \alpha_3 $ & $ \alpha_4 $ &  $ \alpha_5 $ &  $ \alpha_6 $ \\ \hline
		BMSPOS2 &HUOPs* &29,389 &17,005 &11,876 &8,961 &7,065 &5,884 \\
		($\beta$: 0.3)&HUOPs& 29,444 &17,048 &11,899 &8,987 &7,094 &5,905 \\
		\hline
		retail&HUOPs* &12,801 &8,155 &5,902 &4,513 &3,989 &3,282 \\
		($\beta$: 0.3)&HUOPs& 13,105 &8,209 &5,927 &4,527 &4,004 &3,291 \\
		\hline
		chess&HUOPs* &2,423 &1,776 &1,374 &1,108 &903 &695 \\
		($\beta$: 0.3)&HUOPs& 11,469 &8,949 &7,075 &5,510 &4,273 &3,351 \\
		\hline
		mushroom&HUOPs* &17,678 &12,429 &10,739 &10,252 &2,821 &1,199 \\
		($\beta$: 0.3)&HUOPs& 52,795 &33,629 &31,741 &31,025 &4,613 &1,672 \\
		\hline	
		T10I4D100K&HUOPs* &536,414 &246,932 &138,877 &91,557 &67,834 &53,957 \\
		($\beta$: 0.3)&HUOPs& 550,676 &253,617 &144,106 &96,505 &72,555 &59,068 \\
		\hline
		T40I10D100K&HUOPs* &33,171 &14,415 &10,465 &258 &151 &55 \\
		($\beta$: 0.3)&HUOPs& 71,941 &15,496 &12,273 &1,367 &1,002 &548 \\	
		\hline\hline
	\end{tabular}
\end{table}

\begin{table}[htb]
	\fontsize{5.5pt}{9pt}\selectfont
	\centering
	\caption{Derived patterns under varied $ \beta $}
	\label{table:patterns2}
	\begin{tabular}{c|c|llllll}
		\hline\hline
		\multirow{2}*{\textbf{Dataset}}&
		\multirow{2}*{\textbf{Patterns}}
		&\multicolumn{6}{c}{\# \textbf{patterns by varying threshold $ \beta $}}\\
		\cline{3-8}
		&&$ \beta_1 $ & $ \beta_2 $ & $ \beta_3 $ & $ \beta_4 $ &  $ \beta_5 $ &  $ \beta_6 $ \\ \hline
		BMSPOS2 &HUOPs* &287,850 &106,239 &34,195 &11,876 &5,089 &2,696 \\
		($\alpha$: 0.02\%)&HUOPs& 287,902 &106,285 &34,233 &11,899 &5,115 &2,721 \\
		\hline
		retail&HUOPs* &31,036 &13,910 &5,902 &2,295 &842 &315 \\
		($\alpha$: 0.014\%)&HUOPs& 31,067 &13,940 &5,927 &2,335 &883 &346 \\
		\hline
		chess&HUOPs* &9,820 &4,837 &1,374 &514 &125 &14 \\
		($\alpha$: 62\%)&HUOPs& 24,905 &14,050 &7,075 &3,153 &1,170 &357 \\
		\hline
		
		mushroom&HUOPs* &16,733 & 10,739 & 4,914 & 1,327 & 505 & 136\\
		
		($\alpha$: 18\%)&HUOPs& 41,311 &31,741 &21,755 &12,917 &6,431 &2,556 \\
		\hline	
	
		T10I4D100K&HUOPs* &181,541 &	131,631 &	91,557 &	61,079 &	38,980 &	23,435\\
		
		($\alpha$: 0.012\%)&HUOPs& 182,538 &135,710 &96,505 &66,772 &44,380 &27,906 \\
		\hline
		T40I10D100K&HUOPs* &60,749 &37,163 &21,105 &10,465 &3,898 &911 \\
		($\alpha$: 0.8\%)&HUOPs& 68,457 &42,887 &24,592 &12,273 &5,031 &1,594 \\	
		\hline\hline
	\end{tabular}
\end{table}

From TABLE \ref{table:patterns1} and TABLE \ref{table:patterns2}, it can be clearly observed that sets of derived patterns are quite different for various $ \alpha $ or $ \beta $ values. The number of HUOPs* is always smaller than that of HUOPs. For example, on chess dataset as shown in TABLE \ref{table:patterns1} and TABLE \ref{table:patterns2}, most of final HUOPs are missed by the OCEAN algorithm since the number of HUOPs* is always considerably smaller  than that of HUOPs. It means that numerous interesting high utility occupancy patterns are effectively discovered by HUOPM algorithm, while most of them are missed by the OCEAN algorithm. In other words, although the addressed HUOPM problem can capture  high qualified patterns well, OCEAN fails to discover the complete set of HUOPs. It can also be observed that the number of  produced patterns (both HUOPs* and HUOPs), decrease when the minimum support threshold is increased. And less HUOPs* and HUOPs are obtained when $ \beta $ is set higher. Besides, the number of missing patterns which is caused by OCEAN (i.e., HUOPs - HUOPs*) sometimes increases when varying $ \beta $, while it sometimes decreases, as shown on the chess and T10I4D100K datasets when varying $ \beta $ with a fixed $ \alpha $. From the above analysis of the results of found patterns, it can be concluded that the proposed HUOPM algorithm for mining HUOPs is acceptable and can solve a serious shortcoming of the state-of-the-art OCEAN algorithm.

\subsection{Efficiency Analysis}
To evaluate the efficiency of OCEAN and the proposed HUOPM algorithm on various datasets, a performance comparison of the different strategies used in HUOPM is presented next in terms of execution time. The results in terms of runtime for various parameters are shown in Fig. \ref{fig:Runtime1} and Fig. \ref{fig:Runtime2}, respectively.

\begin{figure}[htbp]
	\centering
    \includegraphics[trim=30 13 45 25,clip,scale=0.46]{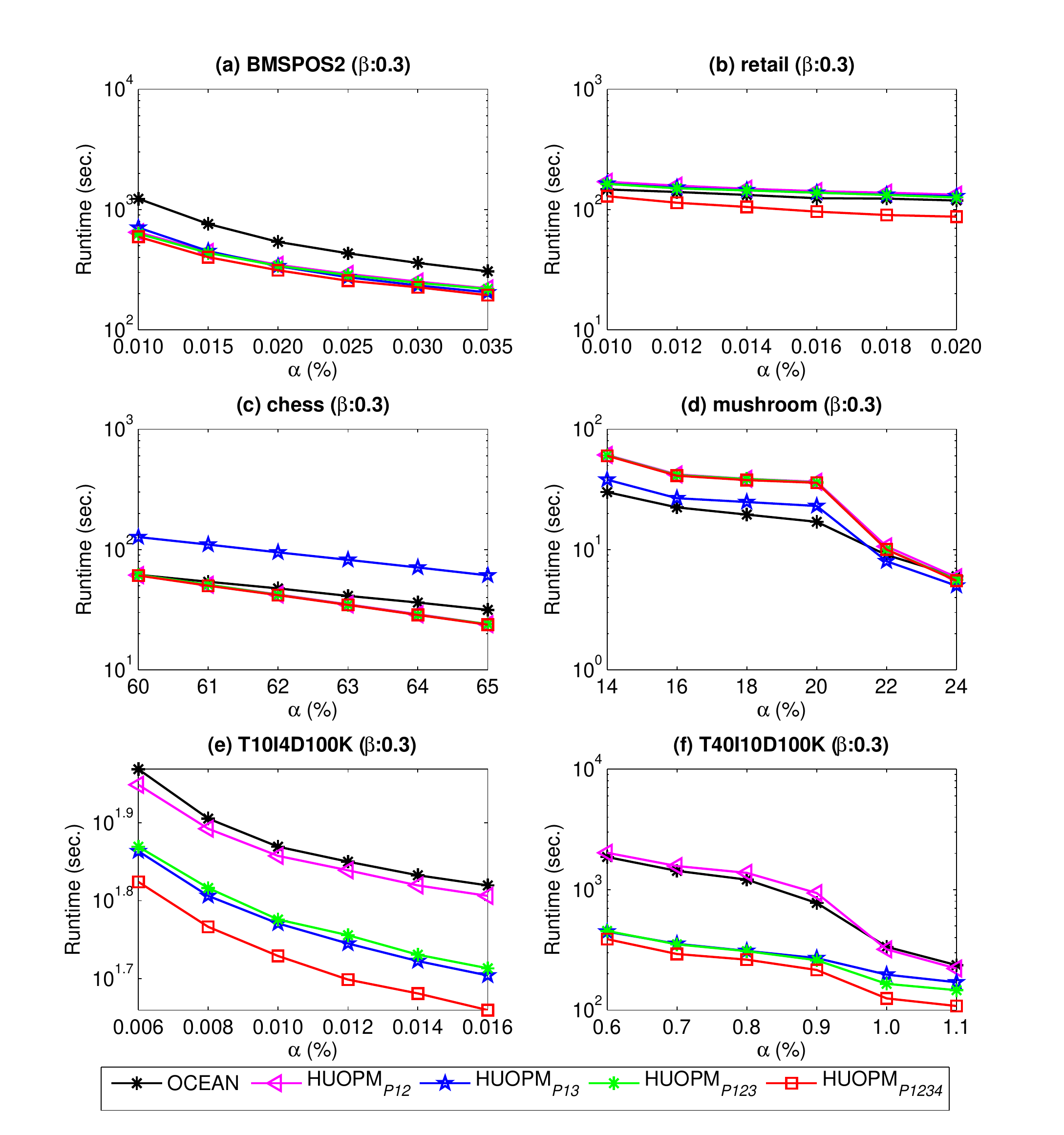}
	\caption{Runtime under varied $ \alpha $ with a fixed $ \beta $.}
	\label{fig:Runtime1}	
\end{figure}

\begin{figure}[htbp]
	\centering
    \includegraphics[trim=30 13 45 25,clip,scale=0.46]{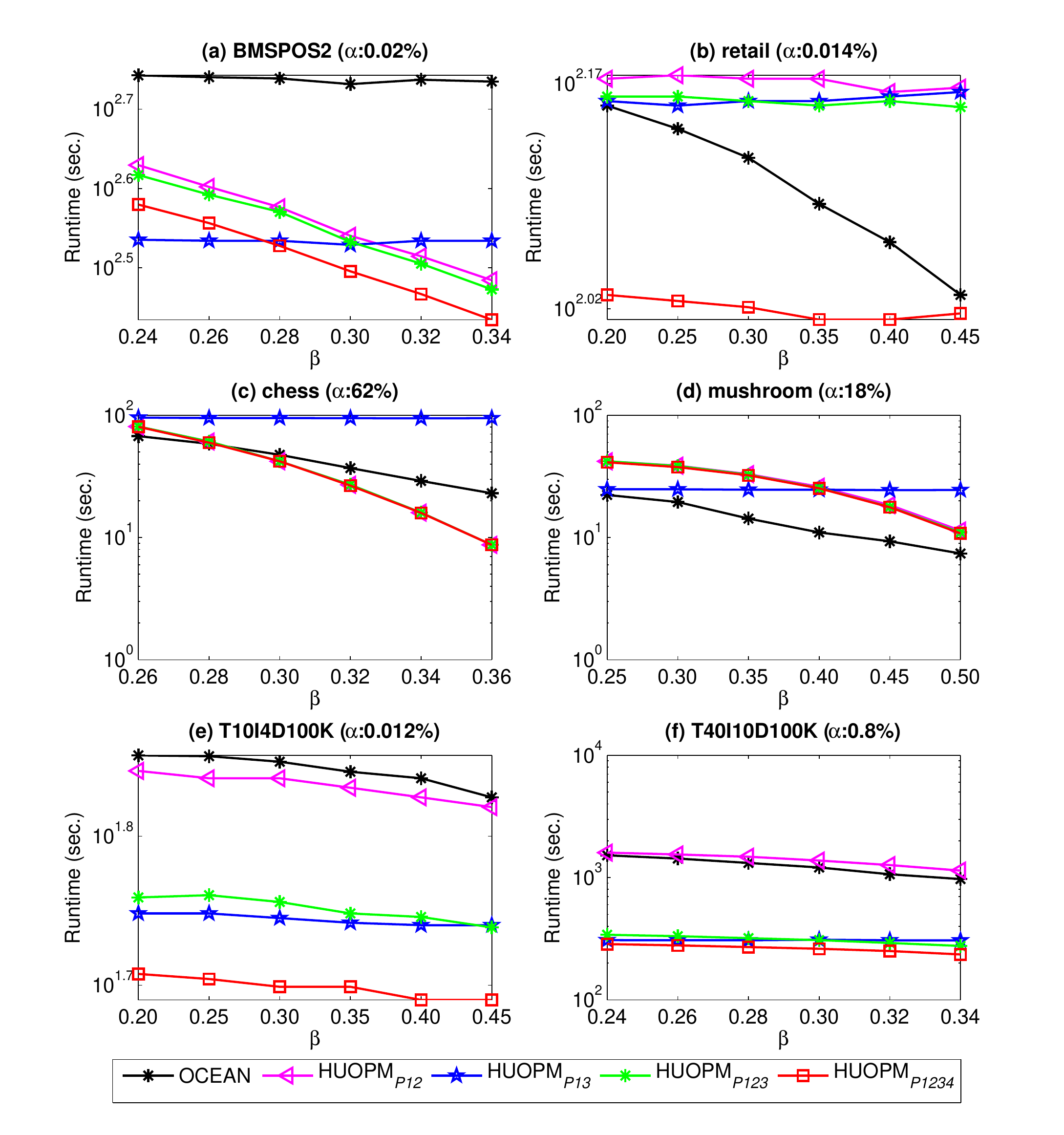}
	\caption{Runtime under varied $ \beta $ with a fixed $ \alpha $.}
	\label{fig:Runtime2}	
\end{figure}

From Fig. \ref{fig:Runtime1} and Fig. \ref{fig:Runtime2}, it can be clearly observed that the runtime of OCEAN is the worst, compared to the other algorithms in most cases, and that the HUOPM$ _{P1234} $ algorithm (which adopts  all pruning strategies) is the fastest on all datasets in all cases. 
We also draw the following conclusions. 
(1) The difference between HUOPM$ _{P12} $ and HUOPM$ _{P123} $ indicates that the Strategy 3 always reduces the search space by pruning subtrees, as it can be observed in Fig. \ref{fig:Runtime1} and Fig. \ref{fig:Runtime2}. 
(2) By comparing HUOPM$ _{P13} $ and HUOPM$ _{P123} $, it can be concluded that the Strategy 2, using the upper bound of utility occupancy, provides a trade-off between efficiency and effectiveness. 
For example, consider results for BMSPOS2, shown in Fig. \ref{fig:Runtime2}(a). When $ \beta $ = 0.24, HUOPM$ _{P13} $ is faster than HUOPM$ _{P123} $; while HUOPM$ _{P13} $ consumes more runtime than HUOPM$ _{P123} $ when $ \beta $ = 0.32 and  $ \beta $ = 0.34. The reason for this behavior is that it needs to spend additional time to calculate the upper bounds, but sometimes unpromising itemsets can be directly pruned by other pruning strategies. (3) In general, the Strategy 4 can significantly reduce the execution time. For example, as shown in Fig. \ref{fig:Runtime1}(b), (d) and (f), Fig. \ref{fig:Runtime2}(a), (b) and (e), the runtime of HUOPM$ _{P1234} $ is considerably smaller  than that of HUOPM$ _{P123} $. Therefore, the HUOPM algorithm applies four pruning strategies to prune unpromising patterns early, which greatly speed up the mining efficiency, compared to the baseline algorithms and the OCEAN algorithm.

Furthermore, we evaluate the statistical analysis \cite{demvsar2006statistical} by two-way ANOVA test \cite{wonnacott1990introductory} to show whether the proposed HUOPM has a significant difference to the traditional OCEAN algorithm. Based on a two-way ANOVA analysis, in most cases, there is a significant difference of the runtime between HUOPM and OCEAN under varied $ \alpha $ with a fixed $ \beta $. Details are described below: $F$ = 11.551, $P$ = 0.019 $<$ 0.050, in Fig. \ref{fig:Runtime1}(a); $F$ = 156.372, $P$ $<$ 0.001, in Fig. \ref{fig:Runtime1}(b); $F$ = 3.619, $P$ = 0.116, in Fig. \ref{fig:Runtime1}(c); $F$ = 8.871, $P$ = 0.031 $<$ 0.050, in Fig. \ref{fig:Runtime1}(d); $F$ = 491.368, $P$ $<$ 0.001, in Fig. \ref{fig:Runtime1}(e); $F$ = 11.597, $P$ = 0.019 $<$ 0.050, in Fig. \ref{fig:Runtime1}(f). Consider Fig. \ref{fig:Runtime2}, the results of a two-way ANOVA analysis are presented below. $F$ = 211.304, $P$ $<$ 0.001, in Fig. \ref{fig:Runtime2}(a); $F$ = 18.649, $P$ = 0.008 $<$ 0.010, in Fig. \ref{fig:Runtime2}(b); $F$ = 1.303, $P$ = 0.305, in Fig. \ref{fig:Runtime2}(c); $F$ = 27.371, $P$ = 0.003 $<$ 0.050, in Fig. \ref{fig:Runtime2}(d); $F$ = 11705.343, $P$ $<$ 0.001, in Fig. \ref{fig:Runtime2}(e); $F$ = 157.518, $P$ $<$ 0.001, in Fig. \ref{fig:Runtime2}(f). Thus, there is a significant difference of the runtime between the proposed HUOPM and the OCEAN algorithm under different parameter settings.

\subsection{Effect of Pruning Strategies} 
To assess the impact of pruning strategies, we further compared the number of nodes visited in the FU-tree. The number of nodes visited by the four versions of the proposed algorithm (HUOPM$ _{P12} $, HUOPM$ _{P13} $, HUOPM$ _{P123} $ and HUOPM$ _{P1234} $) are denoted as $N_1$, $N_2$, $N_3$ and $N_4$, respectively. Results for the same parameter settings as in previous experiments are shown in Fig. \ref{fig:Nodes1} and Fig. \ref{fig:Nodes2}, respectively.

\begin{figure}[htbp]
	\centering
	\includegraphics[trim=35 20 45 25,clip,scale=0.46]{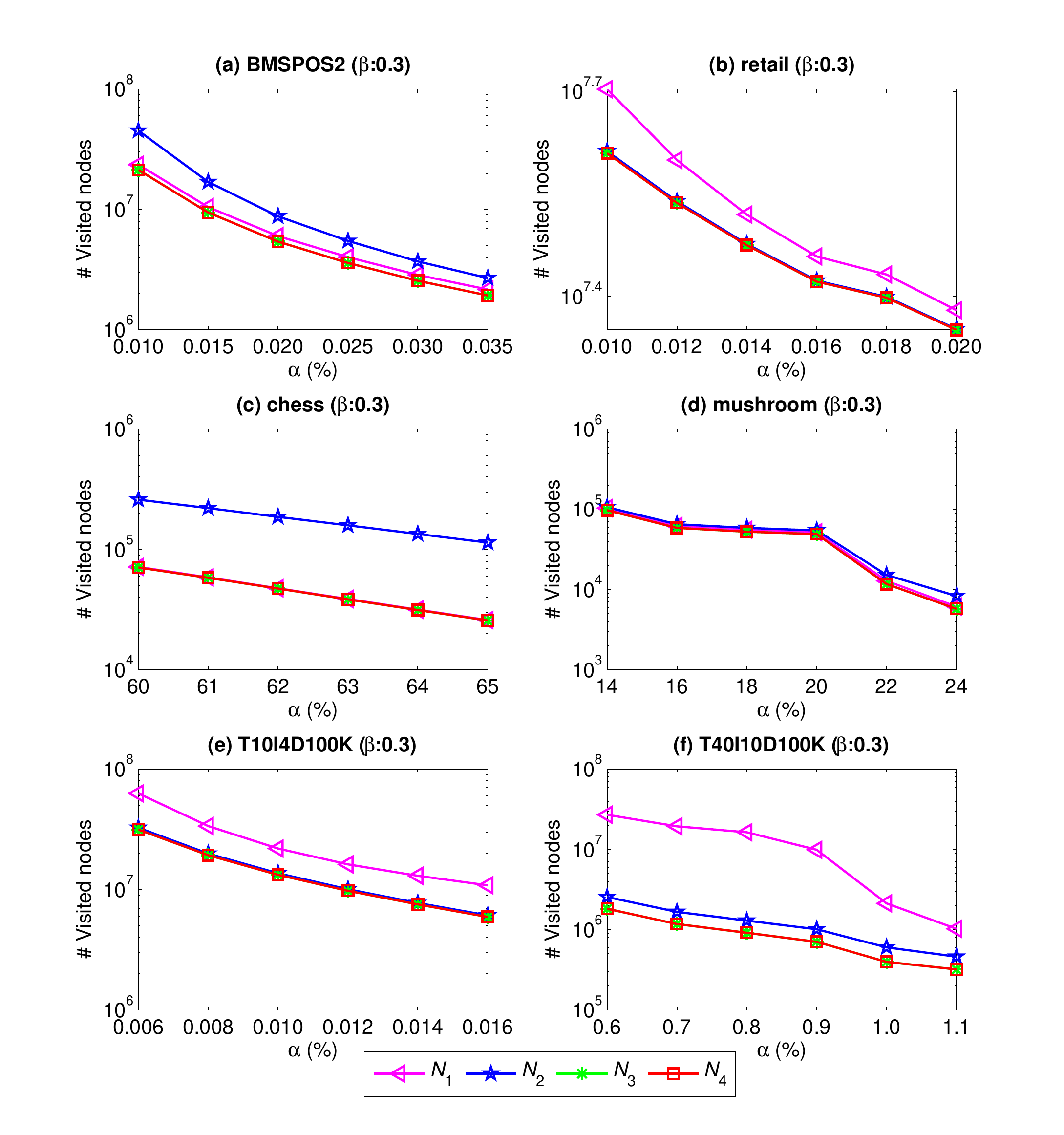}
	\caption{The number of visited nodes under varied $ \alpha $ with a fixed $ \beta $.}
	\label{fig:Nodes1}	
\end{figure}

\begin{figure}[htbp]
	\centering
	\includegraphics[trim=35 18 45 20,clip,scale=0.46]{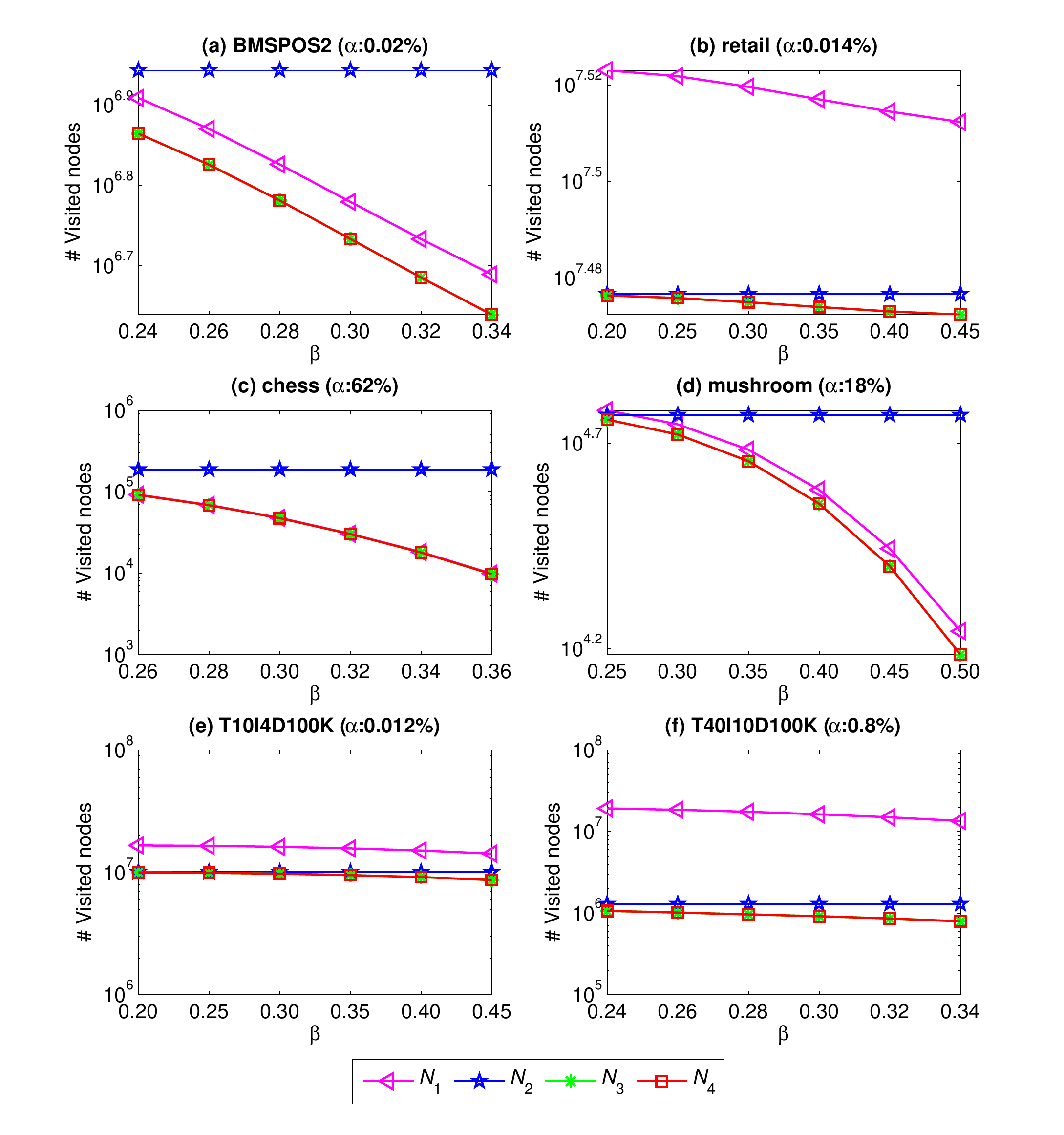}
	\caption{The number of visited nodes under varied $ \beta $ with a fixed $ \alpha $.}
	\label{fig:Nodes2}	
\end{figure}

Based on Fig. \ref{fig:Runtime1} to Fig. \ref{fig:Nodes2}, five observations are made. 
(1) HUOPM prunes a larger part of the search space in terms of number of nodes and is faster when  $ \alpha $ and $ \beta $ are increased. 
In general, the runtime and the number of visited nodes in the search tree of HUOPM decrease as $ \alpha $ or $ \beta $ increases. 
Therefore, the effect of the pruning strategies on the number of visited nodes, to some extent, reflects the execution time required by the algorithms. Hence, a smaller tree is more quickly processed since less execution time is required for spanning the FU-tree and constructing UO-lists.
(2) Comparing $ N_2 $ and $ N_3 $ (with and without the upper bound) indicates that the proposed upper bound is useful for pruning unpromising patterns, thus leading to a more compact search space on both sparse and dense datasets. (3) The numbers  $ N_3 $ and  $ N_4 $ are always the same regardless of the increase in $ \alpha $ and $ \beta $. It indicates that the Strategy 4 does not  filter more unpromising patterns when compared to Strategy 3, but it has a powerful effect on reducing the runtime. We can see this trend more clearly in Fig. \ref{fig:Runtime1} and Fig. \ref{fig:Runtime2} by observing the difference between HUOPM$ _{P123} $ and HUOPM$ _{P1234} $. Thus, HUOPM$ _{P1234} $ always consumes the least runtime among the four versions of the algorithm. (4) The higher $ \alpha $ and $ \beta $ are set, the smaller the search space (number of visited nodes) is. Thus, the less runtime is required. Moreover, the less memory usage is required, but we omit the detailed memory consumption results due to the page limit. (5) The utility occupancy is a significant factor for improving mining performance in the addressed mining task.

\subsection{Processing Order of Items}
Since the processing order of items may influence the performance of a mining algorithm, it is important to select a suitable sorting order for the proposed algorithm. To assess how different processing orders influence the performance of the proposed HUOPM algorithm, we measured the runtime and memory consumption of HUOPM with all pruning strategies but with different processing orders. Five types of processing orders are evaluated: the lexicographic order (denoted as HUOPM$ _{lexi} $), the transaction-weighted utilization ascending order (denoted as HUOPM$ _{twuas} $), the transaction-weighted utilization descending order (denoted as HUOPM$ _{twude} $), the support ascending order (denoted as HUOPM$ _{supas} $), and the support ascending descending order (denoted as HUOPM$ _{supde} $).

\begin{figure}[!htbp]
	\centering
	\includegraphics[trim=105 10 80 0,clip,scale=0.3]{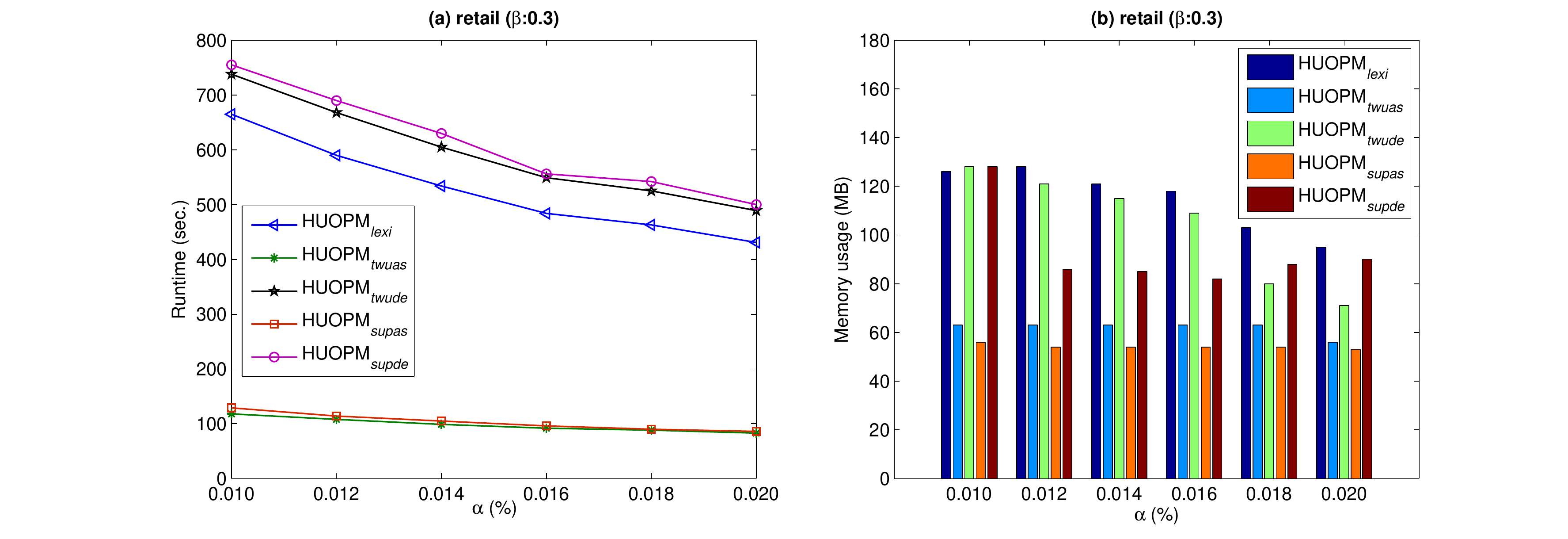}
	\caption{Effect of processing order of items (Five types of processing orders: HUOPM$ _{lexi} $, HUOPM$ _{twuas} $, HUOPM$ _{twude} $, HUOPM$ _{supas} $ and HUOPM$ _{supde} $).}
	\label{fig:OrderOfItems}
\end{figure}

Fig. \ref{fig:OrderOfItems} shows the experimental results on the accidents and retail datasets, respectively. As we can see, HUOPM$ _{twuas} $ has  similar performance when compared to HUOPM$ _{supas} $. However, HUOPM$ _{twuas} $ requires more memory usage than HUOPM$ _{supas} $, as shown in Fig. \ref{fig:OrderOfItems}(b). Thus, the adopted support ascending order (HUOPM$ _{supas} $) always leads to the best performance in terms of execution time and memory usage. Clearly, the \textit{TWU} or support descending order provides the worst performance. It indicates that the adopted processing order w.r.t. support ascending order of items in the proposed HUOPM approach can greatly reduce the number of UO-lists and  join operations performed for a mining task.

\section{Conclusions and Future Work} 
\label{sec:conclusion}

In this paper, we propose an effective and efficient HUOPM algorithm to address a new research problem of mining high utility occupancy patterns with utility occupancy. The utility occupancy can lead to useful patterns that contribute a large portion of total utility for each individual transaction representing user interests or user habit. To the best of our knowledge, no prior algorithms address this problem successfully and effectively. It may provide a new research perspective for utility mining, to a certain extent. In HUOPM, an upper bound of utility occupancy for all possible extensions rooted at a processed node/itemset can be quickly derived by utilizing the UO-list. Based on the developed two \textit{downward closure} properties and four pruning strategies, the HUOPM algorithm can directly discover the profitable HUOPs from the FU-tree using UO-list without candidate generation. Extensive experiments on several datasets show that HUOPM can efficiently find the complete set of high utility occupancy patterns and significantly outperforms the state-of-the-art OCEAN algorithm. Moreover, the proposed pruning strategies are powerful to prune the search space and speed up the mining performance. Extending HUOPM to address the distributed mining problem \cite{gan2017data}, dynamic mining \cite{2gan2018survey}, and privacy preserving  issue \cite{gan2018privacy} are part of our future work.

\ifCLASSOPTIONcaptionsoff
  \newpage
\fi

\bibliographystyle{IEEEtran}
\bibliography{paper}


\vspace{-1.2cm}
\begin{IEEEbiography}[{\includegraphics[width=1in,height=1.25in,clip,keepaspectratio]{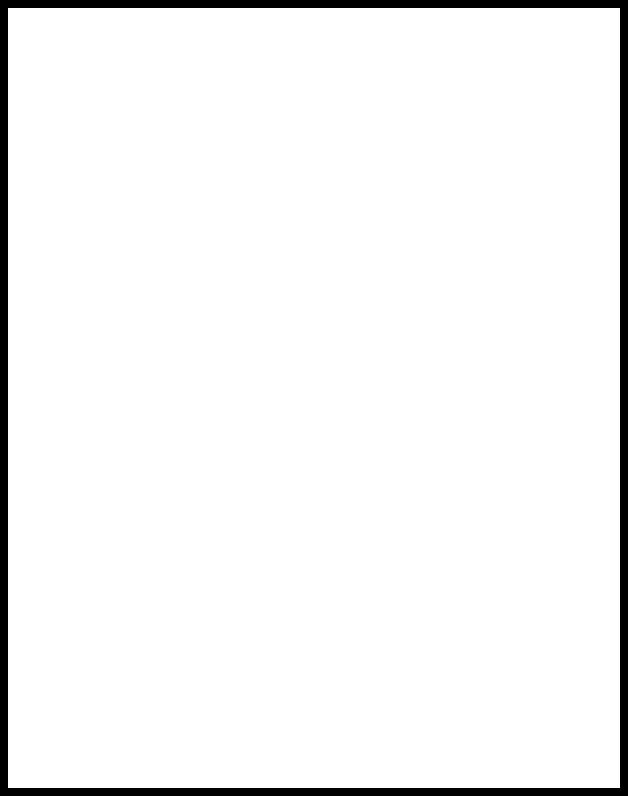}}]{Wensheng Gan}
	is currently a joint Ph.D. candidate at the Department of Computer Science, University of Illinois at Chicago, IL, USA; and a Ph.D. candidate at the School of Computer Science and Technology, Harbin Institute of Technology (Shenzhen), Guangdong, China. He received B.S. and M.S. degrees in Computer Science from South China Normal University and Harbin Institute of Technology (Shenzhen), Guangdong, China, in 2013 and 2015, respectively. His research interests include data mining, utility mining and Big Data technologies. He has published more than 50 research papers in peer-reviewed journals and international conferences.
\end{IEEEbiography}

\vspace{-1.2cm}
\begin{IEEEbiography}[{\includegraphics[width=1in,height=1.25in,clip,keepaspectratio]{newAuthor.png}}]{Jerry Chun-Wei Lin}
	is an associate professor at Department of Computing, Mathematics and Physics, Western Norway University of Applied Sciences, Bergen, Noway. He received the Ph.D. in Computer Science and Information Engineering, National Cheng Kung University, Tainan, Taiwan in 2010. His research interests include data mining, privacy-preserving and security, big data analytics, and machine learning.  He has published more than 250 research papers in peer-reviewed international conferences and journals. He is the co-leader of the popular SPMF open-source data mining library, the funder of the PPSF open-source privacy-preserving and security project, and Editor-in-Chief of the \textit{Data Mining and Pattern Recognition} (DSPR) journal.
\end{IEEEbiography}

\vspace{-2cm}
\begin{IEEEbiography}[{\includegraphics[width=1in,height=1.25in,clip,keepaspectratio]{newAuthor.png}}]{Philippe Fournier-Viger}
	is a full professor and Youth 1000 scholar at the Harbin Institute of Technology (Shenzhen), Shenzhen, China. He received the Ph.D. in  Computer Science from the University of Quebec, Montreal, in 2010. His research interests include data mining, pattern mining, sequence analysis and prediction, e-learning, and social network mining. He has published more than 200 research papers in peer-reviewed international conferences and journals. He is the founder of the popular SPMF open-source data-mining library, which has been cited in more than 700 research papers since 2010. He is also Editor-in-Chief of the \textit{Data Mining and Pattern Recognition} (DSPR) journal.
\end{IEEEbiography}

\vspace{-2cm}
\begin{IEEEbiography}[{\includegraphics[width=1in,height=1.25in,clip,keepaspectratio]{newAuthor.png}}]{Han-Chieh Chao}
	has been the president of National Dong Hwa University since February 2016. He received M.S. and Ph.D. degrees in Electrical Engineering from Purdue University in 1989 and 1993, respectively. His research interests include high-speed networks, wireless networks, IPv6-based networks, and artificial intelligence. He has published nearly 500 peer-reviewed professional research papers. He is the Editor-in-Chief (EiC) of IET Networks' \textit{Journal of Internet Technology}. He is the founding EiC of the \textit{International Journal of Internet Protocol Technology }and the \textit{International Journal of Ad Hoc and Ubiquitous Computing}. Dr. Chao has served as the guest editor for ACM MONET, IEEE JSAC, \textit{IEEE Communications Magazine}, \textit{IEEE Systems Journal}, \textit{Computer Communications}, \textit{IEEE Proceedings Communications}, \textit{Wireless Personal Communications}, and \textit{Wireless Communications \& Mobile Computing}. Dr. Chao is an IEEE Senior Member and a fellow of IET. 
\end{IEEEbiography}

\vspace{-2cm}
\begin{IEEEbiography}[{\includegraphics[width=1in,height=1.25in,clip,keepaspectratio]{newAuthor.png}}]{Philip S. Yu}
	received the B.S. degree in Electrical Engineering from National Taiwan University, M.S. and Ph.D. degrees in Electrical Engineering from Stanford University, and an MBA from New York University. He is a distinguished professor of computer science with the University of Illinois at Chicago (UIC) and also holds the Wexler Chair in Information Technology at UIC. Before joining UIC, he was with IBM, where he was manager of the Software Tools and Techniques Department at the Thomas J. Watson Research Center. His research interests include data mining, data streams, databases, and privacy. He has published more than 1,200 papers in peer-reviewed journals (i.e., TKDE, TKDD, VLDBJ, ACM TIST) and conferences (KDD, ICDE, WWW, AAAI, SIGIR, ICML, CIKM, etc). He holds or has applied for more than 300 U.S. patents. Dr. Yu is the Editor-in-Chief of \textit{ACM Transactions on Knowledge Discovery from Data}. He received the ACM SIGKDD 2016 Innovation Award for his influential research and scientific contributions on mining, fusion, and anonymization of Big Data, and the IEEE Computer Society 2013 Technical Achievement Award. Dr. Yu is a fellow of the ACM and the IEEE.
\end{IEEEbiography}

\end{document}